\let\csname equation*\endcsname\relax
\let\csname endequation*\endcsname\relax
\let\csname leftroot\endcsname\relax
\newcommand{\ito}[1]{\overset{\scriptscriptstyle{0}}{#1}}
\newcommand{\str}[1]{\overset{\scriptscriptstyle{1/2}}{#1}}
\newcommand{\oti}[1]{\overset{\scriptscriptstyle{1}}{#1}}
\newcommand{\dt}{\mathrm{d}t}
\newcommand{\ti}{t_{\mathrm{o}}}
\newcommand{\tf}{t_{\mathrm{f}}}
\newcommand{\id}{\mathsf{1}}
\newcommand{\hoz}{\tf,\ti}
\newcommand{\brl}{\bm{\left(}}
\newcommand{\brr}{\bm{\right)}}
\theoremstyle{plain} 
\newtheorem{proposition}{Proposition}[section]
\theoremstyle{definition}
\theoremstyle{remark}
\begin{document}

\title[On extremals of the entropy production by ``Langevin--Kramers'' dynamics]
{On extremals of the entropy production by ``Langevin--Kramers'' dynamics}

\author{Paolo Muratore-Ginanneschi}
\address{Department of Mathematics and Statistics
PL 68
FIN-00014 University of Helsinki
Finland}

\ead{paolo.muratore-ginanneschi@helsinki.fi}

\pacs{
05.40.-a Fluctuation phenomena statistical physics,
05.70.Ln Nonequilibrium and irreversible thermodynamics,
02.50.Ey Stochastic processes, 
02.30.Yy Control Theory
}

\begin{abstract} 
We refer as ``Langevin--Kramers'' dynamics to a class of stochastic differential systems exhibiting 
a degenerate ``metriplectic'' structure. This means that the drift field can be decomposed into 
a symplectic and a gradient-like component with respect to a pseudo-metric tensor 
associated to random fluctuations affecting increments of only a sub-set of the
degrees of freedom. Systems in this class are often encountered in applications 
as elementary models of Hamiltonian dynamics in an heat bath eventually relaxing to a 
Boltzmann steady state. 

Entropy production control in Langevin--Kramers models differs from the now well-understood 
case of Langevin--Smoluchowski dynamics for two reasons. 
First, the definition of entropy production stemming from fluctuation theorems specifies 
a cost functional which does not act \emph{coercively} on all degrees of freedom 
of control protocols.
Second, the presence of a symplectic structure imposes a \emph{non-local} constraint 
on the class of admissible controls.
Using Pontryagin control theory and restricting the attention to additive 
noise, we show that \emph{smooth} protocols attaining extremal values of the entropy 
production appear generically in continuous parametric families as a consequence 
of a trade-off between smoothness of the admissible protocols and non-coercivity 
of the cost functional. Uniqueness is, however, always recovered in the over-damped limit as 
extremal equations reduce at leading order to the 
Monge--Amp\`ere--Kantorovich optimal mass transport equations.
\end{abstract}

\maketitle

\section{Introduction}
\label{sec:intro}

The contrivance and development of techniques that can be used to investigate
the physics of very small system is currently attracting great interest \cite{Rit08}.
Examples of very small system are bio-molecular machines consisting of few or,
in some cases, even one molecule. These systems are able to ``efficiently'', in some sense, operate 
in non-equilibrium environments strongly affected by thermal fluctuations 
\cite{Haynie}.
For example, protein biosynthesis relies on the quality and efficiency with which the ribosome, 
a complex molecular machine, is able to pair mRNA codons with matching tRNA. 
During such process, known as ``decoding'', the ribosome and tRNA undergo large conformation changes 
which appear to correspond to an optimal in energy landscape recognition strategy \cite{SaTl13}.

One relevant motivation for attaining a precise characterizations of thermodynamic
efficiency of single-biomolecule systems such as molecular switches is the hope that their 
properties can be exploited in molecular-scale information processing \cite{Ben82}.
A long-standing conjecture by Landauer \cite{Lan61} surmises that the erasure of information 
is a dissipative process. Brownian computers \cite{Ben82} incorporate this conjecture in the form of 
models amenable to experimental \cite{BeArPeCiDiLu12} and theoretical \cite{AuGaMeMoMG12}
investigation.  In particular, the erasure of one bit of information can be modeled by steering
the evolution of the probability density of a diffusion process in a bistable potential by
manipulating the height and the depth of the potential wells \cite{Ben82}. Using the optimal control 
techniques introduced in the context of stochastic thermodynamics by \cite{AuMeMG11,AuMeMG12}, the 
authors of \cite{AuGaMeMoMG12} (see also \cite{Gaw13}) showed that the minimal average heat release 
during the erasure of one bit of information generically exceeds Landauer's $k_{\mathrm{B}}\,\beta^{-1}\ln 2$ 
bound where $\beta$ is the inverse temperature and $k_{\mathrm{B}}$ Boltzmann's constant. 
This theoretical result is consistent with the experimental findings of \cite{BeArPeCiDiLu12} where 
one bit erasure was modeled by a system of a single colloidal particle trapped in a modulated 
double-well potential.

The generality of Landauer's argument upholding an ultimate physical limit of irreversible computation 
and the existence of the recent experimental supporting evidences indicate 
that the conclusions of \cite{AuGaMeMoMG12} should remain true for Markov models more general than 
Langevin--Smoluchowski's dynamics. This intuition is corroborated by the fact that the entropy produced by
smoothly evolving the probability distribution of Markov jump processes between two assigned states is also subject 
to an analogous lower bound \cite{MGMePe12}.
The scope of the present contribution is to explore minimal entropy production transitions of continuous Markov 
processes governed by kinematic laws comprising a dissipative and a symplectic, conservative, component.
The consideration of such systems, commonly referred to as Langevin--Kramers or under-damped, is important
as they provide models of Newtonian mechanics in a heat bath. 

The structure of the paper is as follows. In section~\ref{sec:def} we describe the kinematic properties 
of a Langevin--Kramers diffusion. We also define the class $\mathbb{A}$ of \emph{admissible Hamiltonians} governing the
Langevin--Kramers dynamics to which we restrict our attention while considering the optimal control problem for 
the entropy production.
From the mathematical slant, it is obvious that an optimal control problem is well-posed if we assign besides
the ``cost functional'' to be minimized, the functional space of admissible controls. From the physics point 
of view, our aim is to explicitly restrict the attention to ``macroscopic'' control protocols modeled by smooth 
Hamiltonians acting on ``slower'' time scales as opposed to configurational degrees of freedom  subject to Brownian 
forces and fluctuating at the fastest time-scales in the model \cite{AlRiRi11} (see also discussion in \cite{MGMePe12}).  

In section~\ref{sec:tf} we briefly recall the stochastic thermodynamics of Langevin--Kramers diffusions drawing on
\cite{Se98,Sekimoto}. The main result is the expression of the entropy production $\mathcal{E}_{\hoz}$ over a
finite control horizon $[\ti,\tf]$ in terms of the current velocity of the Langevin--Kramers diffusion process.
As in the Langevin--Smoluchowski case \cite{JiangQianQian,AuMeMG12,PMG13}, the current velocity parametrization
plays a substantive role in unveiling the properties of the entropy production from both the thermodynamic 
and the control point of view. In the Langevin--Kramers case, the entropy production turns out to be a non-coercive
\cite{FlemingSoner} cost functional. Namely, the decomposition of the current velocity into dissipative and symplectic components
evinces that the entropy production is in fact a quadratic functional of the dissipative component alone. The origin of the
phenomenon is better understood by revisiting the probabilistic interpretation of the entropy production
which we do in section~\ref{sec:KL}. There we recall that $\mathcal{E}_{\hoz}$ coincides with the Kullback--Leibler 
divergence $\mathrm{K}(\mathrm{P}_{\chi} ||\mathrm{P}_{\tilde{\chi}})$ \cite{KuLe51} between the probability measure 
$\mathrm{P}_{\chi}$ of the primary Langevin--Kramers process $\chi$ and that $\mathrm{P}_{\tilde{\chi}}$ of a 
process $\tilde{\chi}$ obtained from the former by inverting the sign of the dissipative component of 
the drift and evolving in the opposite time direction. 
The Kullback--Leibler divergence is a relative entropy measuring the information loss occasioned when 
$\mathrm{P}_{\tilde{\chi}}$ is used to approximate $\mathrm{P}_{\chi}$. 
These facts substantiate, on the one hand, the identification of the entropy production as a natural indicator of the 
irreversibility of a physical process and, as such, as the embodiment of the second law of thermodynamics. On the other hand,
they pinpoint that the interpretation of the entropy production as a Kullback--Leibler divergence is possible in the
Kramers--Langevin case only by applying in the construction of the auxiliary process $\tilde{\chi}$ a different 
time reversal operation than in the Langevin--Smoluchowski case. Non-coercivity is the result of such time reversal operation.

The consequences of non-coercivity of the cost functional on the entropy production optimal control are, however, tempered 
by the regularity requirements imposed on the class $\mathbb{A}$ of admissible Hamiltonians. Simple considerations (section~\ref{sec:bound})
based on smoothness of the evolution show that the Langevin--Kramers entropy production must be bounded from below by the
entropy production generated by an optimally controlled Langevin--Smoluchowski diffusion connecting in the same horizon 
the configuration space marginals of the initial and final phase space probability densities. 
This result motivates the analysis of section~\ref{sec:Pontryagin} where we avail us of Pontryagin's maximum principle 
to directly investigate extremals of the entropy production in a finite 
time transition between assigned states. Pontryagin's maximum principle is formulated in terms of
Lagrange multipliers acting as conjugate ``momentum'' variables (see e.g. \cite{FlemingRishel,MalliarisBrock,Evans_OCT}).
We can therefore construe it as an ``Hamiltonian formulation'' of Bellman's optimal control theory which is based 
upon dynamic programming equations (see e.g. \cite{FlemingSoner,GuMo83,vHa07}). 
Relying on Pontryagin's maximum principle we conveniently arrive at the first main finding of the
present paper encapsulated in the extremal equations (\ref{Occam:eqs}). 
\emph{On the space $\mathbb{A}$ of admissible control Hamiltonians the entropy production generically attains an highly degenerate minimum value}. 
A distinctive feature of the extremal equations (\ref{Occam:eqs}) is that the coupling between the dynamic programming 
equation and the Fokker--Planck equation takes the \emph{non-local form} of a third auxiliary equation. We attribute
the occurrence of a non-local coupling to the divergenceless component of the Langevin--Kramers drift.

We illustrate these results in section~\ref{sec:ex} where we consider the explicitly solvable case of Gaussian statistics. 
While considering this example, we also inquire the recovery in the over-damped limit of the expression 
of the minimal entropy production by Langevin--Smoluchowski diffusions, a problem to which we systematically
turn in section~\ref{sec:multi}. There we derive our second main result: upon applying a multi-scale (homogenization)
asymptotic analysis \cite{BeLiPa78,PaSt08} we show that \emph{the cell problem associated to the extremal equations} (\ref{Occam:eqs})   
\emph{takes the form of a Monge--Amp\`ere--Kantorovich mass transport problem} \cite{Villani} for configuration
space marginals of the phase space probability densities.
The noteworthy aspect of this result is that the degeneracy of the extremals (\ref{Occam:eqs}) does not appear 
in the cell problem so that consistence with the results obtained for Langevin--Smoluchowski diffusions \cite{AuGaMeMoMG12} 
is guaranteed.

Finally, the last section \ref{sec:final} is devoted to a discussion and some conjectures concerning the existence
\emph{singular control strategies} which we explicitly ruled out while deriving the extremal equations (\ref{Occam:eqs}).

\section{From kinematics to dynamics}
\label{sec:def}

We consider a phase-space dynamics governed by 
\begin{subequations}
\label{def:process}
\begin{eqnarray}
\label{def:sde}
\mathrm{d}\boldsymbol{\chi}_{t}=
\left(\mathsf{J}-\mathsf{G}\right)\cdot\partial_{\boldsymbol{\chi}_{t}}H\frac{\dt}{\tau}
+\sqrt{\frac{2}{\beta\,\tau}} \,\mathsf{G}^{1/2}\cdot \mathrm{d}\boldsymbol{\omega}_{t}
\end{eqnarray}
\begin{eqnarray}
\label{def:init}
\mathrm{P}(\boldsymbol{x}\leq\boldsymbol{\chi}_{\ti}<\boldsymbol{x}+\mathrm{d}\boldsymbol{x} )
=\mathtt{m}_{\mathrm{o}}(\boldsymbol{x})\mathrm{d}^{2d}x
\end{eqnarray}
\end{subequations}
In (\ref{def:sde}) $\omega=\left\{\boldsymbol{\omega}_{t}\,,t\geq \ti\right\}$ denotes an $\mathbb{R}^{2\,d}$-valued Wiener-process while 
 $\mathsf{J}$ and $\mathsf{G}$ are $2\,d$-dimensional contravariant tensors of rank 2 with constant 
entries. In particular, $\mathsf{J}$ is the ``co-symplectic'' form
\begin{eqnarray}
\label{def:symplectic}
\mathsf{J}=\begin{bmatrix}
\mathsf{0} & \id_{d}
\\
-\id_{d} & \mathsf{0}
\end{bmatrix}
\hspace{1.0cm}\Rightarrow\hspace{1.0cm}
\mathsf{J}^{\dagger}\mathsf{J}=\id_{2\,d}
\end{eqnarray}
where $\id_{d}$ stands for the identity in $d$-dimensions. Furthermore, we associate 
to thermal noise fluctuations the constant pseudo-metric tensor
\begin{eqnarray}
\label{def:pm}
\mathsf{G}=\mathsf{P}_{v}
\end{eqnarray}
with $\mathsf{P}_{v}$ the vertical projector in phase space
\begin{eqnarray}
\label{}
\mathsf{P}_{h}\equiv\id_{d}\oplus \mathsf{0}
\hspace{1.0cm}\&\hspace{1.0cm}
\mathsf{P}_{v}\equiv\,\mathsf{0}\oplus\id_{d}
\end{eqnarray}
so that $\id_{2d}=\mathsf{P}_{h}\oplus\mathsf{P}_{v}$ and, for any $\boldsymbol{x}
=\boldsymbol{q}\oplus\boldsymbol{p}\in\mathbb{R}^{2\,d}$ with $\boldsymbol{p},\boldsymbol{q}\,\in\,\mathbb{R}^{d}$,
\begin{eqnarray}
\label{}
\begin{bmatrix}
\boldsymbol{q} \\ \boldsymbol{0}
\end{bmatrix}
\equiv\mathsf{P}_{h}\cdot\boldsymbol{x}
\hspace{1.0cm}\&\hspace{1.0cm}
\begin{bmatrix}
\boldsymbol{0} \\ \boldsymbol{p}
\end{bmatrix}\equiv\mathsf{P}_{v}\cdot\boldsymbol{x}
\end{eqnarray}
Finally, $\beta$ and $\tau$ are positive definite constants. We attribute to 
$\beta$ the physical interpretation of the inverse of the temperature 
and to $\tau$ that of the characteristic time scale of the system.
We will measure any other quantity encountered throughout the paper
in units of $\beta$ and $\tau$.

The kinematics in (\ref{def:sde}) satisfies the conditions required by 
H\"ormander theorem to prove that for any sufficiently regular, bounded 
from below and, growing sufficiently fast at infinity \emph{Hamilton function} $H$, 
the process $\chi=\left\{\boldsymbol{\chi}_{t}\,,t\in [\ti\,,\tf]\right\}$ admits a smooth transition probability density 
notwithstanding the degenerate form of the noise (see e.g. \cite{Rey06}). 
Furthermore, if $H$ is time independent, it is straightforward to verify that 
the measure relaxes to a steady state such that
\begin{subequations}
\label{def:measure}
\begin{eqnarray}
\label{def:spdf}
\mathrm{P}(\boldsymbol{x}\leq \boldsymbol{\chi}_{\infty}
<\boldsymbol{x}+\mathrm{d}\boldsymbol{x})=\beta^{d}\,
e^{\beta\,[F-H(\boldsymbol{x})]}\mathrm{d}^{2\,d}\boldsymbol{x}
\end{eqnarray}
\begin{eqnarray}
\label{def:fe}
F\equiv-\frac{1}{\beta}\ln\int_{\mathbb{R}^{2d}}
\mathrm{d}^{2\,d}x\,\beta^{d}\,e^{-\beta\,H(\boldsymbol{x})}
\end{eqnarray}
\end{subequations}
The expression of the normalization constant in (\ref{def:spdf}) befits
the interpretation of $F$ as \emph{equilibrium free energy}. 
For any finite time, we find expedient to write the probability
density of the system in the form
\begin{eqnarray}
\label{def:Mentropy}
\mathrm{P}(\boldsymbol{x}\leq \boldsymbol{\chi}_{t}
<\boldsymbol{x}+\mathrm{d}\boldsymbol{x})=\mathtt{m}(\boldsymbol{x},t)\,\mathrm{d}^{2\,d}\boldsymbol{x}
\equiv \beta^{d}\,
e^{-S(\boldsymbol{x},t)}\mathrm{d}^{2\,d}\boldsymbol{x}
\end{eqnarray}
We will refer to the non-dimensional function 
$S$ as the \emph{microscopic entropy} of the system inasmuch it specifies the amount of
information required to describe the state of the system given that the state
occurs with probability (\ref{def:Mentropy}) \cite{Sha48}.
The average variation of $S$ with respect to the measure of $\chi$, 
which we denote by $\mathrm{E}^{(\chi)}$,
\begin{eqnarray}
\label{def:Gibbs}
(\mathcal{S}_{t_{2}}-\mathcal{S}_{t_{1}})
=\mathrm{E}^{(\chi)}[S(\boldsymbol{\chi}_{t_{2}},t_{2})
-S(\boldsymbol{\chi}_{t_{1}},t_{1})]\hspace{1.0cm}
\forall\,t_{2}\geq t_{1}\geq\,0
\end{eqnarray}
specifies the variation of the \emph{Shannon-Gibbs entropy} of the system.
The representation (\ref{def:Mentropy}) of the probability density establishes 
an elementary link between the kinematics and the thermodynamics of the system. 
In order to describe the dynamics, we need to specify the Hamiltonian $H$. 
Our aim here is to determine $H$ by solving an optimal control problem associated 
to the minimization of a certain thermodynamic functional, the entropy production,
during a transition evolving the initial state (\ref{def:init}) into 
a final state
\begin{eqnarray}
\label{def:fin}
\mathrm{P}(\boldsymbol{x}\leq\boldsymbol{\chi}_{0}<\boldsymbol{x}+\mathrm{d}\boldsymbol{x} )
=\mathtt{m}_{\mathrm{f}}(\boldsymbol{x})\,\mathrm{d}^{2d}x
\end{eqnarray}
in a finite time horizon $[\ti\,,\tf]$. From this slant, we need 
to regard $H$ as the element of a class $\mathbb{A}$  of \emph{admissible controls}
comprising \emph{time}-dependent phase space functions
\begin{eqnarray}
\label{}
H\colon\mathbb{R}^{2\,d}\times\mathbb{R}_{+}\mapsto\mathbb{R}
\end{eqnarray}
satisfying the following requirements. Any $H\in\mathbb{A}$ must be at least
twice differentiable with respect to its phase space dependence and once 
differentiable for any $t\in [\ti\,,\tf]$. Furthermore, we require that for any Hamiltonian in $\mathbb{A}$ 
the evolution of the probability density of $\chi$ obeys a Fokker--Planck equation 
throughout the control horizon $[\ti\,,\tf]$. As we adopt the working hypothesis that
the initial and final state are described by probability densities integrable over
$\mathbb{R}^{2d}$, admissible Hamiltonians must then preserve this property for any $t\in\left[\ti\,,\tf\right]$.
These hypotheses entail
\begin{eqnarray}
\label{def:admissible}
\mathbb{A} \subset \mathbb{C}^{(2,1)}(\mathbb{R}^{2d},[\ti,\tf])\cap\mathbb{L}^{(2)}(\mathbb{R}^{2d},\mathtt{m}\,\mathrm{d}^{2d}x)
\end{eqnarray}
where $\mathtt{m}\,\mathrm{d}^{2d}x$  portends square integrability requirement with respect
to the density of (\ref{sec:def}). We will reserve the simpler notation $\mathbb{L}^{(2)}(\mathbb{R}^{2d})$
to the space of functions square integrable with respect to the Lebesgue measure.

\subsection{Generator of the process and ``metriplectic'' structure}

The scalar generator  $\mathfrak{L}$ of (\ref{def:process}) 
acts on any differentiable phase space function $f$ as
\begin{eqnarray}
\label{def:gen}
(\mathfrak{L}f)(\boldsymbol{x},t)\equiv\left\{
\left[\left(\mathsf{J}-\mathsf{G}\right)\cdot\partial_{\boldsymbol{x}}H(\boldsymbol{x},t)\right]\cdot\partial_{\boldsymbol{x}}
+\frac{1}{\beta} \,\mathsf{G}:\partial_{\boldsymbol{x}}\otimes\partial_{\boldsymbol{x}}\right\}f(\boldsymbol{x},t)
\end{eqnarray}
In (\ref{def:gen}) and in what follows we use the notation
\begin{eqnarray}
\label{def:matrix}
\mathsf{A}:\mathsf{B}\equiv\Tr\mathsf{A}^{\dagger}\mathsf{B}
\end{eqnarray}
for the scalar product between matrices. 
It is worth noticing that it is possible to write the generator in terms 
of a symplectic and a ``metric'' bracket operation. Namely, the transpose 
$\mathsf{J}^{\dagger}$ of $\mathsf{J}$ defines a \emph{symplectic form} between
differentiable phase space functions  $f_{i}$, $i=1,2$ 
\begin{eqnarray}
\label{def:Poisson}
\brl f_{1}\,,f_{2}\brr_{\mathsf{J}^{\dagger}}\equiv 
(\partial_{\boldsymbol{x}}f_{1})\cdot(\mathsf{J}^{\dagger}\cdot\partial_{\boldsymbol{x}}f_{2})
\equiv
\partial_{\boldsymbol{p}}f_{1}\overset{d}{\cdot}\partial_{\boldsymbol{q}}f_{2}
-\partial_{\boldsymbol{q}}f_{1}\overset{d}{\cdot}\partial_{\boldsymbol{p}}f_{2}
\end{eqnarray}
coinciding with \emph{Poisson brackets} in a Darboux chart.
In (\ref{def:gen}) and (\ref{def:Poisson}) the symbol ``$\cdot$'' 
stands for the dot-product in $\mathbb{R}^{2d}$ and $\overset{d}{\cdot}$ for the analogous operation in $\mathbb{R}^{d}$.
Similarly, it is possible to associate to $\mathsf{G}$ the degenerate metric brackets
\begin{eqnarray}
\label{def:pseudomb}
\brl f_{1}\,,f_{2}\brr_{\mathsf{G}}\equiv(\partial_{\boldsymbol{x}}f_{1})
\cdot(\mathsf{G}\cdot\partial_{\boldsymbol{x}}f_{2})
\end{eqnarray}
acting on scalars in $\mathbb{R}^{2d}$ and the pseudo-norm
\begin{eqnarray}
\label{}
\parallel\boldsymbol{f}\parallel_{\mathsf{G}}^{2}\equiv\boldsymbol{f}\cdot\mathsf{G}\cdot\boldsymbol{f}
\hspace{1.5cm}\forall\,\boldsymbol{f}\in\mathbb{R}^{2d}
\end{eqnarray}
The generator becomes
\begin{eqnarray}
\label{def:genp}
\mathfrak{L}f=\brl H\,,f \brr_{\mathsf{J}^{\dagger}}+\mathfrak{G}f
\end{eqnarray}
where
\begin{eqnarray}
\label{def:gendiss}
\mathfrak{G}f
=-\brl H\,,f\brr_{\mathsf{G}}+\frac{1}{\beta}\mathsf{G}:\partial_{\boldsymbol{x}}\otimes\partial_{\boldsymbol{x}}f
\end{eqnarray}
The Poisson brackets embody the energy conserving component
of the kinematics. The differential operation $\mathfrak{G}$ describes dissipation 
which occurs via a deterministic friction mechanism associated
to the metric brackets and via thermal interactions encapsulated
in the second order differential terms. In the analytic mechanics literature
it is customary to refer to systems whose generator comprises a symplectic 
and a metric structure as ``\emph{metriplectic}'' see e.g. \cite{Mor86,Mor09}.

The $\mathbb{L}^{(2)}(\mathbb{R}^{2d})$ adjoint of $\mathfrak{L}$ with respect to the Lebesgue 
measure governs the evolution of the probability density $\mathtt{m}$ of the system. 
The anti-symmetry of the Poisson brackets yields
\begin{eqnarray}
\label{def:FP}
\mathfrak{L}^{\dagger}f=-\brl H\,,f\brr_{\mathsf{J}^{\dagger}}+\mathfrak{G}^{\dagger}f
\end{eqnarray}
with $\mathfrak{G}^{\dagger}$ the $\mathbb{L}^{(2)}(\mathbb{R}^{2d})$-adjoint of 
(\ref{def:gendiss}).

\section{Thermodynamic functionals}
\label{sec:tf}

Following \cite{Se98}, we identify the \emph{heat} released during individual realizations
of $\chi$ with the Stratonovich stochastic integral
\begin{eqnarray}
\label{tf:heat}
Q_{\hoz}=-\int_{\ti}^{\tf}\mathrm{d}\boldsymbol{\chi}_{t}\str{\cdot}\partial_{\boldsymbol{\chi}_{t}}H
\end{eqnarray}
The $1/2$ betokens Stratonovich's mid-point convention.
If in conjunction with (\ref{tf:heat}) we define the
\emph{work} as
\begin{eqnarray}
\label{check:work}
W_{\hoz}=\int_{\ti}^{\tf}\dt\,\partial_{t}H
\end{eqnarray}
we recover the first law of thermodynamics in the form 
\begin{eqnarray}
\label{}
(W-Q)_{\hoz}=\int_{\ti}^{\tf}\left(\dt\,\partial_{t}H+d\boldsymbol{\chi}_{t}\str{\cdot}\partial_{\boldsymbol{\chi}_{t}}H\right)
=H_{\tf}-H_{\ti}
\end{eqnarray}
As our working hypotheses allow us to perform integrations by parts 
without generating boundary terms, the definition of the Stratonovich integral begets 
(see e.g. \cite{Nelson85} pag. 33) the equality 
\begin{eqnarray}
\label{tf:derH}
\mathrm{E}^{(\chi)}\int_{\ti}^{\tf}\mathrm{d}\boldsymbol{\chi}_{t}\str{\cdot}\partial_{\boldsymbol{\chi}_{t}}H=
\mathrm{E}^{(\chi)}\int_{\ti}^{\tf}\frac{\dt}{\tau} \,\boldsymbol{v}_{t}\cdot\partial_{\boldsymbol{\chi}_{t}}H
\end{eqnarray}
for 
\begin{eqnarray}
\label{tf:cv}
\boldsymbol{v}
=\mathsf{J}\cdot\partial_{\boldsymbol{x}}H
-\mathsf{G}\cdot\partial_{\boldsymbol{x}}
\left(H-\frac{1}{\beta}S\right)
\end{eqnarray}
the \emph{current velocity} \cite{Nelson01} (see also \ref{ap:derivatives}) and $S$ the microscopic entropy 
(\ref{def:Mentropy}).
Upon inserting (\ref{tf:cv}) into (\ref{tf:derH}) after straightforward algebra we arrive at
\begin{eqnarray}
\label{tf:entropy}
\hspace{-1.0cm}
\frac{\mathcal{E}_{\hoz}}{\beta}\,\equiv\mathrm{E}^{(\chi)}\left\{Q_{\hoz}+\frac{S_{\tf}-S_{\ti}}{\beta}\right\}
=\mathrm{E}^{(\chi)}\int_{\ti}^{\tf}\frac{\dt}{\tau} \parallel\partial_{\boldsymbol{\chi}_{t}}A\parallel_{\mathsf{G}}^{2}\,\geq\,0
\end{eqnarray}
We interpret the phase space function
\begin{eqnarray}
\label{tf:Hel}
A=H-\frac{1}{\beta}S
\end{eqnarray}
as the ``\emph{non-equilibrium} Helmholtz energy density'' of the system and, 
the non-dimensional quantity $\mathcal{E}$ as the \emph{entropy production} during the transition 
(see e.g. \cite{MaReMo00,JiangQianQian,ChGa08}). The interpretation is upheld by 
observing that the entropy production rate, $\mathrm{E}^{(\chi)}\parallel\partial_{\boldsymbol{\chi}_{t}}A\parallel_{\mathsf{G}}^{2}$ , is a positive definite 
quantity generically vanishing only at equilibrium. 
On this basis, we regard the inequality (\ref{tf:entropy}) as the embodiment of the \emph{second law}
of thermodynamics. Furthermore, the positive definiteness of the entropy production yields a Jarzynski 
type \cite{Jar97} bound for the mean work
\begin{eqnarray}
\label{}
\mathrm{E}^{(\chi)}W_{\hoz}=\mathrm{E}^{(\chi)}\left(A_{\tf}-A_{\ti}\right)+\frac{1}{\beta}\mathcal{E}_{\hoz}
\geq\mathrm{E}^{(\chi)}\left(A_{\tf}-A_{\ti}\right)
\end{eqnarray}

\section{Probabilistic interpretation of the thermodynamic functionals }
\label{sec:KL}

The entropy production (\ref{tf:entropy}) admits an intrinsic information 
theoretic interpretation as a quantifier of the irreversibility of a transition. Namely
it coincides with the Kullback--Leibler divergence between
the measure of the process (\ref{def:process}) and that of the \emph{backward-in-time}
diffusion process $\tilde{\chi}=\left\{\boldsymbol{\tilde{\chi}}_{t};
t\in[\ti,\tf]\right\}$ obtained by reverting the sign of the dissipative component 
of the drift:
\begin{subequations}
\label{KL:process}
\begin{eqnarray}
\label{KL:sde}
\mathrm{d}\boldsymbol{\tilde{\chi}}_{t}=
\left(\mathsf{J}+\mathsf{G}\right)\cdot\partial_{\boldsymbol{\tilde{\chi}}_{t}}H\frac{\dt}{\tau}
+\sqrt{\frac{2}{\beta\,\tau}} \,\mathsf{G}^{1/2}\cdot \mathrm{d}\boldsymbol{\omega}_{t}
\end{eqnarray}
\begin{eqnarray}
\label{KL:init}
\mathrm{P}(\boldsymbol{x}\leq\boldsymbol{\tilde{\chi}}_{\tf}<\boldsymbol{x}+\mathrm{d}\boldsymbol{x} )
=\mathtt{m}(\boldsymbol{x},\tf)\mathrm{d}^{2d}x
\end{eqnarray}
\end{subequations}
In (\ref{KL:init}) $\mathtt{m}(\boldsymbol{x},\tf)$ 
is the probability density generated by (\ref{def:process}) evaluated
at $\tf$ whilst $H$ in (\ref{KL:sde}) is the very same Hamiltonian entering
(\ref{def:process}). The drift in (\ref{KL:sde})
\emph{must} be interpreted as the mean backward derivative $\mathrm{D}_{\boldsymbol{\tilde{\chi}}_{t}}^{-}\boldsymbol{\tilde{\chi}}_{t}$ of the process
$\tilde{\chi}$ (see \ref{ap:derivatives} for details).
In order to compare $\chi$ with $\tilde{\chi}$ we suppose that the corresponding
probability measures $\mathrm{P}_{\chi}$ and $\mathrm{P}_{\tilde{\chi}}$ have support
over the same Borel sigma algebra $\mathcal{F}_{[\ti,\tf]}$ and are absolutely continuous
with respect to the Lebesgue measure. The difference between $\chi$ and
$\tilde{\chi}$ consists then in the fact that for any $t\in[\ti,\tf]$, $\boldsymbol{\chi}_{t}$
is adapted (i.e. measurable with respect) to the sub-sigma algebra $\mathcal{F}_{[\ti,t]}$
of  $\mathcal{F}_{[\ti,\tf]}$ comprising all ``past'' events at time $t$. 
The realization $\boldsymbol{\tilde{\chi}}_{t}$ of $\tilde{\chi}$ is instead
adapted  to the sub-sigma algebra $\mathcal{F}_{[t,\tf]}$
of  $\mathcal{F}_{[\ti,\tf]}$ comprising all ``future'' events at time $t$ (see e.g. \cite{Za86} for details).
A tangible consequence of this difference is that for any integrable test 
vector field $\boldsymbol{V}\colon\mathbb{R}^{2d}\mapsto\mathbb{R}^{2d}$ and any $t\in[\ti,\tf]$
the Ito \emph{pre-point} stochastic integral satisfies
\begin{eqnarray}
\label{KL:pre}
\mathrm{E}^{(\chi)}\int_{\ti}^{t}\mathrm{d}\boldsymbol{\chi}_{t}\ito{\cdot}\boldsymbol{V}(\boldsymbol{\chi}_{t})=
\mathrm{E}^{(\chi)}\int_{\ti}^{t}\mathrm{d}t\, [\left(\mathsf{J}-\mathsf{G}\right)
\cdot\partial_{\boldsymbol{\chi}_{t}}H](\boldsymbol{\chi}_{t})
\cdot\boldsymbol{V}(\boldsymbol{\chi}_{t})
\end{eqnarray}
while instead the \emph{post-point} prescription yields
\begin{eqnarray}
\label{KL:post}
\mathrm{E}^{(\tilde{\chi})}\int_{\ti}^{t}\mathrm{d}\boldsymbol{\tilde{\chi}}_{t}
\oti{\cdot}\boldsymbol{V}(\boldsymbol{\tilde{\chi}}_{t})=
\mathrm{E}^{(\tilde{\chi})}\int_{\ti}^{t}\mathrm{d}t\, [\left(\mathsf{J}+\mathsf{G}\right)
\cdot\partial_{\boldsymbol{\tilde{\chi}}_{t}}H](\boldsymbol{\tilde{\chi}}_{t})
\cdot\boldsymbol{V}(\boldsymbol{\tilde{\chi}}_{t})
\end{eqnarray}
We will now avail us of these observations to prove that
\begin{proposition}
\label{prop:KL}
if we decompose the current velocity (\ref{tf:cv}) into a 
``dissipative'' component
\begin{eqnarray}
\label{KL:cvd}
\boldsymbol{v}_{+}(\boldsymbol{x},t)=-\mathsf{G}\cdot\partial_{\boldsymbol{x}}A(\boldsymbol{x},t)
\end{eqnarray}
and a divergenceless ``conservative'' component
\begin{eqnarray}
\label{KL:cvc}
\boldsymbol{v}_{-}(\boldsymbol{x},t)=\mathsf{J}\cdot\partial_{\boldsymbol{x}}H(\boldsymbol{x},t)
\end{eqnarray}
then Kullback--Leibler divergence between $\mathrm{P}_{\tilde{\chi}}$ and  $\mathrm{P}_{\chi}$ depends only
on $\boldsymbol{v}_{+}$ and is equal to
\begin{eqnarray}
\label{KL:main}
\mathrm{K}(\mathrm{P}_{\chi}||\mathrm{P}_{\tilde{\chi}})\equiv 
\mathrm{E}^{(\chi)}\ln\frac{\mathrm{d}\mathrm{P}_{\chi}}{\mathrm{d}\mathrm{P}_{\tilde{\chi}}}=\beta\,
\mathrm{E}^{(\chi)}\int_{\ti}^{\tf}\frac{\mathrm{d}t}{\tau}\parallel\partial_{\boldsymbol{\chi}_{t}}A\parallel_{\mathsf{G}}^{2}
\end{eqnarray}
\end{proposition}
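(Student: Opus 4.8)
The plan is to identify $\mathrm{K}(\mathrm{P}_{\chi}||\mathrm{P}_{\tilde{\chi}})$ with the $\mathrm{E}^{(\chi)}$-average of a Radon--Nikod\'ym logarithm obtained by a Girsanov-type change of measure between (\ref{def:process}) and (\ref{KL:process}). The structural obstruction is that the noise in (\ref{def:sde}) is \emph{degenerate}: it acts only on the momentum block $\mathsf{G}=\mathsf{P}_{v}$, so $\mathrm{P}_{\chi}$ and $\mathrm{P}_{\tilde{\chi}}$ are both carried by phase-space paths along which the configuration variables obey the \emph{same} deterministic flow $\dot{\boldsymbol{q}}=\partial_{\boldsymbol{p}}H/\tau$ (the horizontal component of the drift equals $\mathsf{J}\cdot\partial_{\boldsymbol{x}}H$ for both processes, since $\mathsf{G}$ annihilates it). I would therefore first disintegrate both measures along this constraint and carry out the change of measure on the non-degenerate momentum-path space: there the two momentum diffusions share the diffusion matrix $\tfrac{2}{\beta\tau}\mathsf{P}_{v}$ and differ only through their drifts, the forward Ito (pre-point) drift $(\mathsf{J}-\mathsf{G})\cdot\partial_{\boldsymbol{x}}H/\tau$ of $\chi$ against the post-point drift $(\mathsf{J}+\mathsf{G})\cdot\partial_{\boldsymbol{x}}H/\tau$ of $\tilde{\chi}$, the latter being the mean backward derivative stipulated below (\ref{KL:sde}).

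Writing $\ln(\mathrm{d}\mathrm{P}_{\chi}/\mathrm{d}\mathrm{P}_{\tilde{\chi}})$ on a time lattice produces three groups of terms. First, the ratio of the endpoint weights $\mathtt{m}_{\mathrm{o}}(\boldsymbol{\chi}_{\ti})/\mathtt{m}(\boldsymbol{\chi}_{\tf},\tf)$ — legitimate because $\boldsymbol{\chi}_{\tf}$ is $\mathtt{m}(\cdot,\tf)$-distributed by (\ref{def:Mentropy}) while $\tilde{\chi}$ is anchored at $\tf$ by (\ref{KL:init}) — reduces via $\mathtt{m}=\beta^{d}e^{-S}$ to the microscopic entropy increment $S_{\tf}-S_{\ti}$. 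Second, the quadratic and cross Girsanov terms, once the momentum increment is put into the Stratonovich mid-point form, collapse onto the Stratonovich integral $-\beta\int_{\ti}^{\tf}\mathrm{d}\boldsymbol{\chi}_{t}\str{\cdot}\partial_{\boldsymbol{\chi}_{t}}H=\beta\,Q_{\hoz}$ of (\ref{tf:heat}); the symplectic piece $\mathsf{J}\cdot\partial_{\boldsymbol{x}}H$ drops out of this combination by the antisymmetry of $\mathsf{J}$ — the analytic shadow of the divergencelessness of $\boldsymbol{v}_{-}$ in (\ref{KL:cvc}) — so that only $\boldsymbol{v}_{+}=-\mathsf{G}\cdot\partial_{\boldsymbol{x}}A$ survives. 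Third, a Jacobian generated by the disintegration: the deterministic $\boldsymbol{q}$-flow is sampled at the left endpoint when $\chi$ is discretized and at the right endpoint when $\tilde{\chi}$ is discretized backward, so passing from one path parametrization to the other contributes a factor $\exp\!\big(-\tfrac{1}{\tau}\int_{\ti}^{\tf}\mathrm{d}t\,\mathrm{tr}(\partial_{\boldsymbol{q}}\otimes\partial_{\boldsymbol{p}}H)\big)$, and this cancels exactly the $\mathrm{tr}(\partial_{\boldsymbol{q}}\otimes\partial_{\boldsymbol{p}}H)$ term that appears when the post-point momentum integral attached to $\tilde{\chi}$ is rewritten pre-point. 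Collecting the three groups yields the pathwise fluctuation identity
\[
\ln\frac{\mathrm{d}\mathrm{P}_{\chi}}{\mathrm{d}\mathrm{P}_{\tilde{\chi}}}=\beta\,Q_{\hoz}+S_{\tf}-S_{\ti}\,.
\]

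Taking $\mathrm{E}^{(\chi)}$ of this identity and using the pre-point and post-point relations (\ref{KL:pre})--(\ref{KL:post}) — the juncture at which the two Ito conventions of $\chi$ and $\tilde{\chi}$ must be kept straight — turns the right-hand side into precisely the entropy production already defined and evaluated in (\ref{tf:entropy}), namely $\mathcal{E}_{\hoz}=\beta\,\mathrm{E}^{(\chi)}\int_{\ti}^{\tf}(\mathrm{d}t/\tau)\parallel\partial_{\boldsymbol{\chi}_{t}}A\parallel_{\mathsf{G}}^{2}$, which is the assertion. Finiteness $\mathrm{K}<\infty$ and absolute continuity of $\mathrm{P}_{\chi}$ with respect to $\mathrm{P}_{\tilde{\chi}}$ on the common Borel $\sigma$-algebra follow from a Novikov-type estimate built on the smoothness of the transition density recalled after (\ref{def:measure}).

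I expect the delicate point to be exactly the degenerate-noise bookkeeping: the change of measure is a genuine Girsanov transformation only on the momentum fibre, so the disintegration along the deterministic configuration flow must be done carefully, and the pre-point discretization natural for $\chi$ versus the post-point discretization natural for the backward process $\tilde{\chi}$ generate competing $\mathrm{tr}(\partial_{\boldsymbol{q}}\otimes\partial_{\boldsymbol{p}}H)$ contributions whose cancellation is the crux of the computation — which is why the identities (\ref{KL:pre}) and (\ref{KL:post}), with their specified conventions, are recorded before the statement. The remaining technical ingredient is a clean justification of the continuum limit of the lattice Radon--Nikod\'ym derivative (equivalently, a direct Girsanov argument on momentum-path space with the configuration variables treated as a path functional).
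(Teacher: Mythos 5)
Your route is correct in outline and lands on the right pathwise identity, but it is genuinely different from the paper's. You propose a direct lattice (Onsager--Machlup) evaluation of $\ln(\mathrm{d}\mathrm{P}_{\chi}/\mathrm{d}\mathrm{P}_{\tilde{\chi}})$, disintegrating both path measures along the degenerate configuration directions and doing Girsanov only on the momentum fibre; the paper instead introduces two auxiliary reference diffusions $\eta$ and $\tilde{\eta}$ carrying only the symplectic drift $\mathsf{J}\cdot\partial H$, observes that for a divergence-free drift with additive noise the forward and backward transition densities coincide up to swapping arguments (so that $\mathrm{d}\mathrm{P}_{\tilde{\eta}}/\mathrm{d}\mathrm{P}_{\eta}$ reduces to the endpoint ratio $\mathtt{m}(\cdot,\tf)/\mathtt{m}_{\mathrm{o}}(\cdot)$), and then applies the forward and backward Cameron--Martin--Girsanov formulas together with the composition property of Radon--Nikod\'ym derivatives. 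The payoff of the paper's construction is exactly the elimination of the step you flag as delicate: no discretization, no disintegration Jacobians, and no competing $\mathrm{tr}(\partial_{\boldsymbol{q}}\otimes\partial_{\boldsymbol{p}}H)$ contributions to track — the pre-point/post-point asymmetry is absorbed once and for all into the identities (\ref{KL:pre})--(\ref{KL:post}) and the backward Girsanov formula. Your final identity $\ln(\mathrm{d}\mathrm{P}_{\chi}/\mathrm{d}\mathrm{P}_{\tilde{\chi}})=\beta\,Q_{\hoz}+S_{\tf}-S_{\ti}$ is consistent with the paper's (\ref{KL:KL2}): expanding that formula and using $\mathrm{d}\boldsymbol{q}=(\dt/\tau)\,\partial_{\boldsymbol{p}}H$ along realizations reassembles $-\beta\int\mathrm{d}\boldsymbol{\chi}_{t}\str{\cdot}\partial_{\boldsymbol{\chi}_{t}}H$ plus the total Stratonovich differential of $\ln\mathtt{m}$, which is your statement; and the passage from $\mathrm{E}^{(\chi)}[\beta Q_{\hoz}+S_{\tf}-S_{\ti}]$ to the quadratic form in $\partial_{\boldsymbol{x}}A$ is indeed just (\ref{tf:derH})--(\ref{tf:entropy}). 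The one caution is that your accounting of the Jacobian cancellation is asserted rather than carried out, and the correction terms in the momentum-fibre conversion involve $\mathrm{tr}(\partial_{\boldsymbol{p}}\otimes\partial_{\boldsymbol{p}}H)$ as well as the configuration-flow Jacobians, so completing your argument rigorously would require precisely the bookkeeping you defer; if that proves awkward, the auxiliary-process device is the cleaner way to finish.
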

\begin{proof}~\newline\noindent
The proof proceeds in two steps: first we introduce two auxiliary diffusion processes
one forward and the other backward in time for which we know the expression of the Radon-Nikodym
derivative of the corresponding probability measures; then we apply Cameron--Martin--Girsanov's formula 
(see e.g. \cite{Klebaner}) to relate the auxiliary processes to $\chi$ and $\tilde{\chi}$.
\begin{enumerate}
\item[\textbf{First step}] We call $\eta=\left\{\boldsymbol{\eta}_{t}; t\in[\ti,\tf]\right\}$ the diffusion with
 $\mathcal{F}_{[\ti,t]}$-adapted realizations solution of the \emph{forward} 
stochastic dynamics
\begin{subequations}
\label{KL:faux}
\begin{eqnarray}
\label{KL:faux1}
\mathrm{d}\boldsymbol{\eta}_{t}=
\mathsf{J}\cdot\partial_{\boldsymbol{\eta}_{t}}H\frac{\dt}{\tau}
+\sqrt{\frac{2}{\beta\,\tau}} \,\mathsf{G}^{1/2}\cdot \mathrm{d}\boldsymbol{\omega}_{t}
\end{eqnarray}
\begin{eqnarray}
\label{KL:faux2}
\mathrm{P}(\boldsymbol{x}\leq\boldsymbol{\eta}_{\ti}<\boldsymbol{x}+\mathrm{d}\boldsymbol{x} )
=\mathtt{m}_{\mathrm{o}}(\boldsymbol{x})\mathrm{d}^{2d}x
\end{eqnarray}
\end{subequations}
Similarly, let  $\tilde{\eta}=\left\{\boldsymbol{\tilde{\eta}}_{t}; t\in[\ti,\tf]\right\}$  diffusion governed by the \emph{backward} dynamics
\begin{subequations}
\label{KL:baux}
\begin{eqnarray}
\label{KL:baux1}
\mathrm{d}\boldsymbol{\tilde{\eta}}_{t}=
\mathsf{J}\cdot\partial_{\boldsymbol{\tilde{\eta}}_{t}}H\frac{\dt}{\tau}
+\sqrt{\frac{2}{\beta\,\tau}} \,\mathsf{G}^{1/2}\cdot \mathrm{d}\boldsymbol{\omega}_{t}
\end{eqnarray}
\begin{eqnarray}
\label{KL:baux2}
\mathrm{P}(\boldsymbol{x}\leq\boldsymbol{\tilde{\chi}}_{\tf}<\boldsymbol{x}+\mathrm{d}\boldsymbol{x} )
=\mathtt{m}(\boldsymbol{x},\tf)\mathrm{d}^{2d}x
\end{eqnarray}
\end{subequations}
with $\boldsymbol{\tilde{\eta}}_{t}$ $\mathcal{F}_{[t,\tf]}$-adapted.
The simultaneous occurrence of additive noise and divergence-less drift in
(\ref{KL:faux1}),  (\ref{KL:baux1}) occasions the identity
\begin{eqnarray}
\label{}
\mathtt{p}_{\eta}(\boldsymbol{x}_{2},t_{2}|\boldsymbol{x}_{1},t_{1})=
\mathtt{p}_{\tilde{\eta}}(\boldsymbol{x}_{1},t_{1}|\boldsymbol{x}_{2},t_{2})
\end{eqnarray}
satisfied by the transition probability densities of $\eta$ and $\tilde{\eta}$ for all 
$\boldsymbol{x}_{1},\boldsymbol{x}_{2}\in\mathbb{R}^{2d}$ and for all $t_{1}, t_{2}\in [\ti,\tf]$, $t_{1}\,\leq\,t_{2}$.
We therefore conclude that 
\begin{eqnarray}
\label{}
\frac{\mathrm{d}\mathrm{P}_{\tilde{\eta}}}{\mathrm{d}\mathrm{P}_{\eta}}=
\frac{\mathtt{m}(\boldsymbol{\eta}_{\tf},\tf)}{\mathtt{m}_{o}(\boldsymbol{\eta}_{\ti})}
\end{eqnarray}
\item[\textbf{Second step}] We apply the composition 
property of the Radon--Nikodym derivative in order to couch (\ref{KL:main})
into the form
\begin{eqnarray}
\label{KL:KL1}
\mathrm{K}(\mathrm{P}_{\chi}||\mathrm{P}_{\tilde{\chi}})
=-
\mathrm{E}^{(\eta)}\frac{\mathrm{d}\mathrm{P}_{\chi}}
{\mathrm{d}\mathrm{P}_{\eta}}
\ln \left(
\frac{\mathrm{d}\mathrm{P}_{\tilde{\chi}}}{\mathrm{d}\mathrm{P}_{\eta}}\middle /
\frac{\mathrm{d}\mathrm{P}_{\chi}}{\mathrm{d}\mathrm{P}_{\eta}}
\right)
\end{eqnarray}
Cameron--Martin--Girsanov's formula yields immediately
\begin{eqnarray}
\label{KL:CMG1}
\hspace{-2.2cm}
\frac{\mathrm{d}\mathrm{P}_{\chi} }{\mathrm{d} \mathrm{P}_{\eta}}=
\exp\left\{\frac{\beta}{2}\int_{\ti}^{\tf}
\left[-\left(\mathsf{G}\cdot\partial_{\boldsymbol{\eta}_{t}}H\right)\ito{\cdot}\left(\mathrm{d}\boldsymbol{\eta}_{t}
-\frac{\mathrm{d}t}{\tau}\mathsf{J}\cdot\partial_{\boldsymbol{\eta}_{t}}H\right)
-\frac{\mathrm{d}t}{\tau}\,\parallel\partial_{\boldsymbol{\eta}_{t}}H\parallel_{\mathsf{G}}^{2}\right]\right\}
\end{eqnarray}
which by is a martingale at time $\tf$ by (\ref{KL:pre}). 
In order to compute $\mathrm{d}\mathrm{P}_{\tilde{\chi}}/\mathrm{d}P_{\eta}$ we first use Cameron--Martin--Girsanov's formula 
adapted to backward processes \cite{Mey82}
\begin{eqnarray}
\label{KL:CMG1}
\hspace{-2.2cm}
\frac{\mathrm{d}\mathrm{P}_{\chi} }{\mathrm{d} \mathrm{P}_{\eta}}=
\exp\left\{\frac{\beta}{2}\int_{\ti}^{\tf}
\left[\left(\mathsf{G}\cdot\partial_{\boldsymbol{\tilde{\eta}}_{t}}H\right)\oti{\cdot}
\left(\mathrm{d}\boldsymbol{\tilde{\eta}}_{t}
-\frac{\mathrm{d}t}{\tau}\mathsf{J}\cdot\partial_{\boldsymbol{\tilde{\eta}}_{t}}H\right)
-\frac{\mathrm{d}t}{\tau}\,\parallel\partial_{\boldsymbol{\tilde{\eta}}_{t}}H\parallel_{\mathsf{G}}^{2}\right]\right\}
\end{eqnarray}
Then we apply again the composition property to write
\begin{eqnarray}
\label{KL:CMG3}
\lefteqn{\hspace{-2.2cm}
\frac{\mathrm{d}\mathrm{P}_{\tilde{\chi}} }{\mathrm{d} \mathrm{P}_{\eta}}
=\frac{\mathrm{d}\mathrm{P}_{\tilde{\chi}} }{\mathrm{d} \mathrm{P}_{\tilde{\eta}}}
\frac{\mathrm{d}\mathrm{P}_{\tilde{\eta}} }{\mathrm{d} \mathrm{P}_{\eta}}
=
}
\nonumber\\&&
\hspace{-1.8cm}
\frac{\mathtt{m}(\boldsymbol{\eta}_{\tf},\tf)}{\mathtt{m}_{o}(\boldsymbol{\eta}_{\ti})}
\exp\left\{\frac{\beta}{2}\int_{\ti}^{\tf}
\left[\left(\mathsf{G}\cdot\partial_{\boldsymbol{\eta}_{t}}H\right)\oti{\cdot}
\left(\mathrm{d}\boldsymbol{\eta}_{t}
-\frac{\mathrm{d}t}{\tau}\mathsf{J}\cdot\partial_{\boldsymbol{\eta}_{t}}H\right)
-\frac{\mathrm{d}t}{\tau}\,\parallel\partial_{\boldsymbol{\eta}_{t}}H\parallel_{\mathsf{G}}^{2}\right]\right\}
\end{eqnarray}
since $\boldsymbol{\eta}_{t}$ on the right hand side plays the role of a mute integration variable. 
Upon inserting (\ref{KL:CMG1}) and (\ref{KL:CMG3}) in (\ref{KL:KL1}) and expressing
the stochastic integrals into the time-reversal invariant Stratonovich
mid-point discretization we arrive at
\begin{eqnarray}
\label{KL:KL2}
\lefteqn{
\ln\frac{\mathrm{d}\mathrm{P}_{\tilde{\chi}} }{\mathrm{d} \mathrm{P}_{\chi}}=
\int_{\ti}^{\tf}\frac{\mathrm{d}t}{\tau}\left\{\left[(\mathsf{J}\cdot\partial_{\boldsymbol{\chi}_{t}}H)\cdot
\partial_{\boldsymbol{\chi}_{t}}
+\tau\,\partial_{t}
\right]\ln\frac{\mathtt{m}}{\beta^{d}}\right\}
}
\nonumber\\&&
+\beta\int_{\ti}^{\tf}\left[
\mathrm{d}\boldsymbol{\chi}_{t}
-\frac{\mathrm{d}t}{\tau}(\mathsf{J}\cdot\partial_{\boldsymbol{\chi}_{t}}H)
\right]\str{\cdot}
\left(\mathsf{G}\cdot\partial_{\boldsymbol{\chi}_{t}}H
+\frac{1}{\beta}\partial_{\boldsymbol{\chi}_{t}} \ln\frac{\mathtt{m}}{\beta^{d}}
\right)
\end{eqnarray}
The first integral vanishes on average since
\begin{eqnarray}
\label{}
\hspace{-1.8cm}
\mathrm{E}^{(\chi)}\left\{\left[(\mathsf{J}\cdot\partial_{\boldsymbol{\chi}_{t}}H)\cdot
\partial_{\boldsymbol{\chi}_{t}}
+\tau\,\partial_{t}
\right]\ln\frac{\mathtt{m}}{\beta^{d}}\right\}=\int_{\mathbb{R}^{2d}}\mathrm{d}^{2d}x\,
\left(\partial_{\boldsymbol{x}}\cdot\boldsymbol{v}+\tau\,\partial_{t}\right)\mathtt{m}=0
\end{eqnarray}
In virtue of the properties of the Stratonovich integral (see e.g. \cite{Nelson85} pag. 33), 
the expectation value of the second integral in (\ref{KL:KL2}) yields
\begin{eqnarray}
\label{}
\lefteqn{\hspace{-2.2cm}
\mathrm{E}^{(\chi)}\int_{\ti}^{\tf}\left[
\mathrm{d}\boldsymbol{\chi}_{t}
-\frac{\mathrm{d}t}{\tau}(\mathsf{J}\cdot\partial_{\boldsymbol{\chi}_{t}}H)
\right]\str{\cdot}
\left(\mathsf{G}\cdot\partial_{\boldsymbol{\chi}_{t}}H
+\frac{1}{\beta}\partial_{\boldsymbol{\chi}_{t}} \ln\frac{\mathtt{m}}{\beta^{d}}
\right)
}
\nonumber\\&&
\hspace{-1.8cm}
=\mathrm{E}^{(\chi)}\int_{\ti}^{\tf}\frac{\mathrm{d}t}{\tau}\boldsymbol{v}_{+}\cdot
\left(\mathsf{G}\cdot\partial_{\boldsymbol{\chi}_{t}}H
+\frac{1}{\beta}\partial_{\boldsymbol{\chi}_{t}} \ln\frac{\mathtt{m}}{\beta^{d}}
\right)
=-\mathrm{E}^{(\chi)}\int_{\ti}^{\tf}\frac{\mathrm{d}t}{\tau}
\parallel\partial_{\boldsymbol{\chi}_{t}}A\parallel_{\mathsf{G}}^{2}
\end{eqnarray}
where the last equality holds because $\mathsf{G}$ is a projector.
\end{enumerate}
\end{proof}
Some remarks are in order. 
\begin{enumerate}
\item The information theoretic interpretation of the entropy production is a consequence 
of the fluctuation relation type \cite{EvSe94,GaCo95,Jar97,Kur98,LeSp99,Cro99,MaReMo00,ChGa08} 
equality (\ref{KL:KL2}). Reference \cite{ChGu10} discusses in details the relation between fluctuation 
relations for Markov processes and exponential martingales. Finally, a recent nice overview of
fluctuation theorems can be found in the lectures \cite{Gaw13}.
\item The proof of the identity (\ref{KL:main}) is based on
the comparison between a forward and a backward dynamics in the sense of Nelson \cite{Nelson85,Nelson01}
and admits a straightforward generalization to all the cases discussed in \cite{ChGa08}.
In particular, choosing the auxiliary process $\eta$ to be the stochastic development map 
(see e.g. \cite{Cou11}) yields readily covariant expressions for the entropy production 
by diffusion on Riemann manifolds \cite{JiangQianQian,PMG13}.
\item The stochastic development map in the Euclidean case with flat metric reduces 
to the standard Wiener process. An alternative proof of (\ref{KL:main}) can be then 
obtained by taking the limit of vanishing noise acting on the position coordinate 
process.
\item The dissipative (\ref{KL:cvd}) and conservative (\ref{KL:cvc}) components of the
current velocity are \emph{not} $\mathbb{L}^{(2)}(\mathbb{R}^{2d},\mathtt{m}\mathrm{d}^{2d}x)$-orthogonal. Therefore, it is not natural to 
regard the dissipative component as an independent control of the entropy production.
\end{enumerate}

\section{A general bound for the entropy production from moments equation}
\label{sec:bound}

The main consequence of the last remark the foregoing section is that the Hamiltonian $H$ is the 
natural control functional for the entropy production. The entropy production
is, however, independent of derivatives of $H$ with respect to position coordinates. This fact 
poses the question whether the uncoerced degrees of freedom can be used to steer a smooth Langevin--Kramers
dynamics to accomplish a finite-time transition between assigned states for arbitrarily low values 
of the entropy production. 
A simple lower bound provided by the ``macroscopic'', in kinetic theory sense 
(see e.g. \cite{Struchtrup}), dynamics shows that this cannot be the case.
Let
\begin{eqnarray}
\label{bound:marginal}
\tilde{\mathtt{m}}(\boldsymbol{q},t)\equiv\int_{\mathbb{R}^{d}}\mathrm{d}^{d}p\,\mathtt{m}(\boldsymbol{p},\boldsymbol{q},t) 
\end{eqnarray}
the marginal probability density over the configuration space of (\ref{def:process}).
Integrating the Fokker-Planck equation governing the evolution of $\mathtt{m}$ over 
momenta it is readily seen that $\tilde{\mathtt{m}}$ obeys
\begin{eqnarray}
\label{bound:transport}
\tau\,\partial_{t}\tilde{\mathtt{m}}+\partial_{\boldsymbol{q}}\cdot \tilde{\mathtt{m}}\,\tilde{\boldsymbol{v}}=0
\end{eqnarray}
We define the ``macroscopic drift''  $\tilde{\boldsymbol{v}}$ as the average
\begin{eqnarray}
\label{ap:macrov}
(\tilde{\mathtt{m}}\,\tilde{\boldsymbol{v}})(\boldsymbol{q},t)\equiv\int_{\mathbb{R}^{d}}\mathrm{d}^{d}p\,
(\mathtt{m}\,\partial_{\boldsymbol{p}}A)(\boldsymbol{p},\boldsymbol{q},t)
\end{eqnarray}
over the momentum gradient of the non-equilibrium Helmholtz energy density (\ref{tf:Hel}). 
Let $\boldsymbol{\tilde{V}}\equiv 0\oplus \boldsymbol{\tilde{v}}$ the phase space lift of $\boldsymbol{\tilde{v}}$. 
Since $\mathsf{G}$ is the vertical projector in $\mathbb{R}^{2d}$, an immediate consequence of (\ref{ap:macrov}) is the inequality
\begin{eqnarray}
\label{bound:inequality}
\mathcal{E}_{\hoz}&=&\int_{\ti}^{\tf}\mathrm{d}t\int_{\mathbb{R}^{2d}}\mathrm{d}^{2d}x\,\mathtt{m}\,
\left(\parallel\partial_{\boldsymbol{x}}A-\tilde{\boldsymbol{V}}\parallel_{\mathsf{G}}^{2}
+\parallel\tilde{\boldsymbol{V}}\parallel_{\mathsf{G}}^{2}\right)
\nonumber\\
&\geq& \int_{\ti}^{\tf}\mathrm{d}t\int_{\mathbb{R}^{d}}\mathrm{d}^{d}q\,
\tilde{\mathtt{m}}\,\parallel\tilde{\boldsymbol{v}}\parallel_{\id_{d}}^{2}
=\tilde{\mathcal{E}}_{\hoz}
\end{eqnarray}
for $\parallel\parallel_{\id_{d}}$ the Euclidean norm in $\mathbb{R}^{d}$.
Taking into account that (\ref{bound:transport}) must also hold true, we interpret $\boldsymbol{\tilde{v}}$
as the current velocity of an effective Langevin--Smoluchowski dynamics. Furthermore,
$\tilde{\mathcal{E}}_{\hoz}$ attains a minimum if the pair $(\tilde{\mathtt{m}}\,,\tilde{\boldsymbol{v}})$ 
is determined from the solution of Monge--Amp\`ere--Kantorovich 
problem \cite{AuMeMG12,AuGaMeMoMG12}. We will see in section~\ref{sec:multi} that the 
bound becomes tight in the presence of a strong separation of scales between position 
and momentum dynamics.

\section{Entropy production extremals via Pontryagin theory}  
\label{sec:Pontryagin}

The existence of the general bound (\ref{bound:inequality}) indicates that
the question of existence of entropy production extremals in the admissible class $\mathbb{A}$ 
(\ref{def:admissible}) is well posed. In order to directly pursue the quest, we introduce 
the Pontryagin functional \cite{FlemingRishel} 
\begin{eqnarray}
\label{Pontryagin:action}
\hspace{-1.0cm}
\mathcal{A}(\mathtt{m},V,A)=\int_{\ti}^{\tf}\frac{\mathrm{d}t}{\tau}\,\int_{\mathbb{R}^{2\,d}}
\mathrm{d}^{2\,d}x\,
\left\{
\mathtt{m}\,\parallel\partial_{\boldsymbol{x}}A\parallel_{\mathsf{G}}^{2}
-V\left(\tau\,\partial_{t}-\mathfrak{L}^{\dagger}\right)\mathtt{m}
\right\}
\end{eqnarray}
complemented by the boundary conditions
\begin{eqnarray}
\label{Pontryagin:bc}
\mathtt{m}(\boldsymbol{x},\ti)=\mathtt{m}_{\mathrm{o}}(\boldsymbol{x})
\hspace{1.0cm}\&\hspace{1.0cm}
\mathtt{m}(\boldsymbol{x},\tf)=\mathtt{m}_{\mathrm{f}}(\boldsymbol{x})
\end{eqnarray}
The functional (\ref{Pontryagin:action}) specifies a generalized entropy production in which 
the dynamical constraint on the probability density appears explicitly. 
The ``costate'' field $V\colon\mathbb{R}^{2d}\times[\ti,\tf]\mapsto\mathbb{R}$ is a Lagrange multiplier imposing the probability 
density $\mathtt{m}$ to evolve according the Fokker--Planck of (\ref{def:process}). The sign convention 
of $V$ suits the identification of the extremal value of the costate with the ``value'' or ``cost-to-go'' 
function of Bellman's formulation of optimal control theory \cite{FlemingSoner,GuMo83}. 
If we exploit the anti-symmetry of the Poisson brackets 
\begin{eqnarray}
\label{Pontryagin:degeneracy1}
\brl S\,,\mathtt{m} \brr_{\mathsf{J}^{\dagger}}=-\brl \ln \mathtt{m}\,,\mathtt{m} \brr_{\mathsf{J}^{\dagger}}=0
\end{eqnarray}
and the definition of the non-equilibrium Helmholtz
energy density (\ref{tf:Hel}), we can always  couch $\mathfrak{L}^{\dagger}\mathtt{m}$ into
a first order differential operation over the probability density 
\begin{eqnarray}
\label{Pontryagin:generator}
\mathfrak{L}^{\dagger}\mathtt{m}=
-\partial_{\boldsymbol{x}}\cdot [\mathtt{m}\,(\mathsf{J}-\mathsf{G})\cdot\partial_{\boldsymbol{x}}A]
\end{eqnarray}
This fact accounts for regarding (\ref{Pontryagin:action}) as a functional of 
the non-equilibrium Helmholtz energy density $A$ and the probability density
$\mathtt{m}$. The right hand side of (\ref{Pontryagin:generator}) coincides with the 
$\mathbb{L}^{(2)}(\mathbb{R}^{2d})$-dual of the generator of deterministic transport 
by the vector field
\begin{eqnarray}
\label{Pontryagin:cgcv}
\boldsymbol{a}\equiv(\mathsf{J}-\mathsf{G})\cdot\partial_{\boldsymbol{x}}A
\end{eqnarray}
effectively describing a ``coarse graining'' of the underlying stochastic dynamics. 
Deterministic transport by (\ref{Pontryagin:cgcv}) arises from the fact that the entropy 
production is a functional of the \emph{individual probability density} specified by the boundary conditions
(\ref{Pontryagin:bc}).
This is at variance with the stochastic optimal control problems considered 
in \cite{FlemingRishel,FlemingSoner} where the cost or pay-off functional is a linear 
functional of the \emph{transition probability density} of the process. 
The entropy production optimal control problem belongs instead to the class encompassed by 
the ``weak-sense'' (stochastic) control theory of \cite{GuMo83}.

\subsection{Variations}
\label{sec:var}

We determine extremals of (\ref{Pontryagin:action}) by considering independent variations of 
$\mathtt{m}$, $V$ and $A$ in the admissible class (\ref{def:admissible}). The admissible class
hypothesis allows us to perform freely all the integrations by parts needed to extricate 
space-time local stationary conditions. After straightforward algebra (\ref{ap:var}), the variations of
$\mathtt{m}$, $V$ and $A$ respectively yield 
\begin{subequations}
\label{Occam:eqs}
\begin{eqnarray}
\label{Occam:value}
\tau\,\partial_{t}V+\brl A\,,V \brr_{\mathsf{J}^{\dagger}}
-\brl A\,,V \brr_{\mathsf{G}}
+\parallel\partial_{\boldsymbol{x}}A\parallel_{\mathsf{G}}^{2}=0
\end{eqnarray}
\begin{eqnarray}
\label{Occam:entropy}
\tau\,\partial_{t}S
+\brl A\,,S \brr_{\mathsf{J}^{\dagger}}+\frac{1}{\beta}\mathfrak{S}A=0
\end{eqnarray}
\begin{eqnarray}
\label{Occam:extremal}
\brl S\,,V \brr_{\mathsf{J}^{\dagger}}+\frac{1}{\beta}\mathfrak{S}V
=\frac{2}{\beta}\,\mathfrak{S}A
\end{eqnarray}
\end{subequations}
By $\mathfrak{S}$ we denote in (\ref{Occam:value}), (\ref{Occam:entropy}) the  operator 
\begin{eqnarray}
\label{var:WiLap}
\mathfrak{S}f=-\brl S,f\brr_{\mathsf{G}}+\mathsf{G}:\partial_{\boldsymbol{x}}\otimes\partial_{\boldsymbol{x}}f 
\end{eqnarray}
\emph{negative} definite for  any $f\in\mathbb{L}^{(2)}(\mathbb{R}^{2d},\mathtt{m}\,\mathrm{d}^{2d}x)$ (\ref{ap:var}).
The extremal equations (\ref{Occam:eqs}) are complemented by the boundary conditions:
\begin{eqnarray}
\label{Occam:bc}
S(\boldsymbol{x},\ti)=-\ln\frac{\mathtt{m}_{\mathrm{o}}(\boldsymbol{x})}{\beta^{d}}
\hspace{1.0cm}\&\hspace{1.0cm}
S(\boldsymbol{x},\tf)=-\ln\frac{\mathtt{m}_{\mathrm{f}}(\boldsymbol{x})}{\beta^{d}}
\end{eqnarray}
The value function (\ref{Occam:value}) and entropy (\ref{Occam:entropy}) equations 
describe deterministic transport by the ``coarse-grained'' current velocity (\ref{Pontryagin:cgcv}).
This latter vector field vanishes at equilibrium, so that (\ref{Occam:eqs}) in this case
admit the physically natural solution
\begin{eqnarray}
\label{}
\partial_{t}V=\partial_{t}S=A=0
\end{eqnarray}
with
\begin{eqnarray}
\label{}
H=\frac{1}{\beta}S
\end{eqnarray}
The condition (\ref{Occam:extremal}) plays for (\ref{Occam:eqs}) a role analogous to that
of pressure in hydrodynamics \cite{BlCrHoMa00}. It enforces a \emph{non-local coupling} between 
the microscopic entropy $S$, and the non-equilibrium Helmholtz energy density $A$. As in the case of 
hydrodynamics non-locality arises from the existence of a divergenceless component of the velocity field. 
In fact, neglecting the Poisson brackets in (\ref{Occam:extremal})
would allow us to recover the local extremal condition $V=2\,A$ analogous to the one minimizing the
entropy production by a Langevin--Smoluchowski dynamics \cite{AuMeMG12,AuGaMeMoMG12}.

Beside non-locality, a second major difference with Langevin--Smoluchowski is that the extremal 
equations (\ref{Occam:eqs}) are \emph{highly degenerate}. Namely, (\ref{Occam:extremal})
does not impose any constraint between the configuration space projection $\partial_{\boldsymbol{q}}A$ 
of the gradient of $A$ and the value function. This is a immediate consequence of the
independence of the entropy production from $\partial_{\boldsymbol{q}}A$. The generic consequence of degeneration 
is that (\ref{Occam:eqs}) describe a \emph{continuous family of controls} for which the entropy 
production attains a local, at least, minimum in $\mathbb{A}$. In the coming section~\ref{sec:ex}
we will illustrate the situation with an explicit example.

\section{An analytically solvable case}
\label{sec:ex}

We can explore more explicitly (\ref{Occam:eqs})
if we assume a Gaussian statistics for the initial and final states of the system. 
In particular, we restrict the attention to a two-dimensional phase space and 
suppose that the microscopic entropy of the initial $\mathrm{i}=\mathrm{o}$ and final
$\mathrm{i}=\mathrm{f}$ states be at most quadratic in $\boldsymbol{x}=q\oplus p$:
\begin{eqnarray}
\label{ex:Sfin}
\hspace{-0.8cm}
\lefteqn{
S_{\mathrm{i}}(p,q)
=
\frac{\beta\,(p-\mu_{p;\mathrm{i}})^{2}}{2\,\sigma_{p;\mathrm{i}}^{2}\,\cos^{2}\theta_{\mathrm{i}}}
+\frac{\beta\,(q-\mu_{q;\mathrm{i}})^{2}}{2\,\sigma_{q;\mathrm{i}}^{2}\,\cos^{2}\theta_{\mathrm{i}}}
}
\nonumber\\&&
\hspace{-0.6cm}
-\beta\,\tan\theta_{\mathrm{i}}\, \frac{(p-\mu_{p;\mathrm{i}})(q-\mu_{q;\mathrm{i}})}
{\sigma_{p;\mathrm{i}}\,\sigma_{q;\mathrm{i}}\,\cos\theta_{\mathrm{i}}}
-\ln\left(\frac{1}{2\,\pi\,\sigma_{p;\mathrm{i}}\,\sigma_{q;\mathrm{i}}\,\cos\theta_{\mathrm{i}}}\right)
\end{eqnarray}
corresponding to
\begin{eqnarray}
\label{}
\mathrm{E}\boldsymbol{\chi}_{t_{\mathrm{i}}}=\begin{bmatrix}
\mu_{q;\mathrm{i}}\\ \mu_{p;\mathrm{i}}
\end{bmatrix}
\end{eqnarray}
and
\begin{eqnarray}
\label{}
\mathrm{E}\left(\boldsymbol{\chi}_{t_{\mathrm{i}}}-\mathrm{E}\boldsymbol{\chi}_{t_{\mathrm{i}}}\right)\otimes
\left(\boldsymbol{\chi}_{t_{\mathrm{i}}}-\mathrm{E}\boldsymbol{\chi}_{t_{\mathrm{i}}}\right)
=\frac{1}{\beta}
\begin{bmatrix}
\sigma_{q;\mathrm{i}}^{2} & \sigma_{q;t_{\mathrm{i}}}\sigma_{p;t_{\mathrm{i}}}\sin\theta_{t_{\mathrm{i}}}
\\
\sigma_{q;\mathrm{f}}\sigma_{p;\mathrm{i}}\sin\theta_{t_{\mathrm{i}}} &\sigma_{p;t_{\mathrm{i}}}^{2}
\end{bmatrix}
\end{eqnarray}
In particular, we choose $\mu_{p;\mathrm{i}}=\mu_{q;\mathrm{i}}=\theta_{\mathrm{o}}=0$ 
whilst $0\,\leq\,\theta_{\mathrm{f}}\,<\pi/2$ parametrizes the 
degree of correlation between position and momentum variables of the final state. 
Under these assumptions, we look for the solution of the extremal equations
by means of quadratic Ans\"atze for the microscopic entropy
\begin{eqnarray}
\label{ex:S}
\hspace{-1.0cm}
\lefteqn{
S(p,q,t)=\frac{\beta\,(p-\mu_{p;t})^{2}}{2\,\sigma_{p;t}^{2}\,\cos^{2}\theta_{t}}
+\frac{\beta\,(q-\mu_{q;t})^{2}}{2\,\sigma_{q;t}^{2}\,\cos^{2}\theta_{t}}
}
\nonumber\\&&
\hspace{-0.8cm}
-\beta\,\tan\theta_{t} \frac{(p-\mu_{p;t})(q-\mu_{q;t})}{\sigma_{p;t}\,\sigma_{q;t}\,\cos\theta_{t}}
-\ln\frac{1}{2\,\pi\,\sigma_{p;t}\,\sigma_{q;t}\,\cos\theta_{t}}
\end{eqnarray}
and the non-equilibrium Helmholtz energy
\begin{eqnarray}
\label{ex:A}
A(p,q,t)=\frac{\mathsf{A}_{11;t}q^{2}+2\,\mathsf{A}_{12;t}p\,q+\mathsf{A}_{22;t}p^{2}}{2}
+a_{1;t}\,q+a_{2;t}\,p
\end{eqnarray}
for all $t\,\in\,[\ti,\tf]$. The Ans\"atze imply that the entropy production
\begin{eqnarray}
\label{ex:ep}
\lefteqn{
\mathcal{E}_{\tf,\ti}=
\beta\int_{\ti}^{\tf}\frac{\mathrm{d}t}{\tau}\,\left\{
2\,a_{2;t}\,\left(\mathsf{A}_{22;t}\,\mu_{p;t}+\mathsf{A}_{12;t}\,\mu_{q;t}\right)
+a_{2;t}^{2}
\right\}
}
\nonumber\\&&
+\beta\int_{\ti}^{\tf}\frac{\mathrm{d}t}{\tau}\,
\left\{\mathsf{A}_{22;t}\left(\mu_{p;t}^{2}+\sigma_{p;t}^{2}\,\cos^{2}\theta_{t}\right)
+\mathsf{A}_{12;t}\left(\mu_{q;t}^{2}+\sigma_{q;t}^{2}\,\cos^{2}\theta_{t}\right)\right\}
\nonumber\\&&
+2\,\beta\int_{\ti}^{\tf}\frac{\mathrm{d}t}{\tau}\,
\mathsf{A}_{12;t}\mathsf{A}_{22;t}\left(\mu_{p;t}\,\mu_{q;t}+\sigma_{p;t}\,\sigma_{q;t}\,\sin\theta_{t}\right)
\end{eqnarray}
does not depend explicitly upon $\mathsf{A}_{11;t}$ and $a_{1;t}$.

Using the quadratic Ans\"atze (\ref{ex:S}), (\ref{ex:A}) in (\ref{Occam:extremal}) 
we obtain
\begin{eqnarray}
\label{ex:V}
\hspace{-2.0cm}
V(p,q,t)=\left(
q^{2}\,\partial_{p}\partial_{q}-2\,q\,\partial_{p}\right)\left[A(p,q,t)
-\frac{y_{t}}{\beta}\,S(p,q,t)\right]
+\frac{2\,y_{t}}{\beta}\,S(p,q,t)+\bar{V}(t)
\end{eqnarray}
where 
\begin{eqnarray}
\label{ex:R}
y_{t}\equiv\beta\,\frac{\partial_{p}^{2}A}{\partial_{p}^{2}S}
\end{eqnarray}
is a function of the \emph{time variable alone} well-defined as long as the
probability density of the state is non-degenerate. The explicit value of 
$\bar{V}(t)$ does not play any role in the considerations which follow.
If we now insert (\ref{ex:V}) into (\ref{Occam:value}) and (\ref{Occam:entropy}), 
these equations foliate into a closed system of ordinary differential equations 
for the coefficients of the Ans\"atze (\ref{ex:S}) and (\ref{ex:A}). 
The calculation is laborious but straightforward. Upon setting 
\begin{eqnarray}
\label{ex:A22}
\mathsf{A}_{22:t}=-\frac{\tau\,\dot{y}_{t}}{y_{t}}
\end{eqnarray}
we find for the coefficients of second order monomials in (\ref{ex:S}) and (\ref{ex:A})
the set of relations
\begin{subequations}
\begin{eqnarray}
\label{ex:A11}
\mathsf{A}_{11:t}=\frac{y_{t}}{\beta}\,\left(\partial_{q}^{2}S-\partial_{p}\partial_{q}S
-\frac{y_{t}}{\dot{y}_{t}}\partial_{t}\partial_{p}\partial_{q}S\right)-\mathsf{A}_{12:t}
\end{eqnarray}
\begin{eqnarray}
\label{ex:A12}
\mathsf{A}_{12:t}=\frac{y_{t}}{\beta}\,\partial_{p}\partial_{q}S+\frac{\tau\,\dot{y}_{t}}{2\,y_{t}}
\end{eqnarray}
\begin{eqnarray}
\label{}
\partial_{p}^{2}S=-\frac{\tau\,\beta\,\dot{y}_{t}}{y_{t}^{2}}
\end{eqnarray}
\begin{eqnarray}
\label{}
\partial_{q}^{2}S=\frac{\left(\partial_{p}\partial_{q}S\right)^{2}}{\partial_{p}^{2}S}
-\tau\,\beta\,\frac{\ddot{y}_{t}^{2}-2\,\dot{y}_{t}\,\dddot{y}_{t}}{4\,\dot{y}_{t}^{3}}
\end{eqnarray}
\end{subequations}
The cross correlation coefficient $\partial_{p}\partial_{q}S$ of the microscopic entropy 
enters these equations as a free parameter only subject to the boundary conditions.
It turns out that the function $y_{t}$ must satisfy the fourth order non-linear differential 
equation
\begin{eqnarray}
\label{}
\ddddot{y}_{t}\,\dot{y}_{t}^{2}-2\,\dot{y}_{t}\,\ddot{y}_{t}\,\dddot{y}_{t}+\ddot{y}_{t}^{3}=0
\end{eqnarray}
with solution
\begin{eqnarray}
\label{ex:y}
y_{t}=\tau\,\Omega\,\left\{c_{0}
+c_{1}\,\Omega\,t+c_{1}\left[\sin\left(\Omega\,t+\varphi\right)-\sin\varphi\right]\right\}
\end{eqnarray}
The coefficients $c_{0},c_{1},\Omega,\varphi$ are fixed by the boundary 
conditions. Upon imposing the boundary conditions for
\begin{eqnarray}
\label{}
\ti=0
\end{eqnarray}
and requiring continuity of solutions for $S_{\mathrm{f}}\overset{\tf\downarrow 0}{\to}S_{o}$, we get into
\begin{eqnarray}
\label{ex:c0}
c_{0}=-\frac{\sigma_{p;\mathrm{o}}\,\sigma_{q;\mathrm{o}}}{2}
\end{eqnarray}
and
\begin{eqnarray}
\label{}
c_{1}=-\frac{\sigma_{q;\mathrm{o}}^{2}}{8\,\cos^{2}\frac{\varphi}{2}}
\end{eqnarray}
while $\Omega$ and $\varphi$ satisfy the transcendental equations:
\begin{subequations}
\label{ex:var_bc}
\begin{eqnarray}
\label{ex:sigmap}
\sigma_{p;\mathrm{f}}^{2}=\frac{\left\{4\,\sigma_{p;\mathrm{o}}\cos^{2}\frac{\varphi}{2}
+\sigma_{q;\mathrm{o}}\left[\Omega\,\tf+\sin(\Omega\,\tf+\varphi)-\sin\varphi\right]\right\}^{2}}
{16\,\cos^{2}\theta_{\mathrm{f}}\,\cos^{2}\frac{\varphi}{2}\,
\cos^{2}\frac{\Omega\,\tf+\varphi}{2}}
\end{eqnarray}
\begin{eqnarray}
\label{ex:sigmaq}
\frac{\sigma_{q;\mathrm{f}}^{2}}{\sigma_{q;\mathrm{o}}^{2}}=
\frac{\cos^{2}\frac{\Omega\,\tf+\varphi}{2}}{\cos^{2}\frac{\varphi}{2}}
\end{eqnarray}
\end{subequations}
We verify that the coefficients of the microscopic entropy are positive definite:
\begin{eqnarray}
\label{}
\hspace{-1.2cm}
\partial_{p}^{2}S=
\frac{16\,\beta\,\cos^{2}\frac{\varphi}{2}\,
\cos^{2}\frac{\Omega\,t+\varphi}{2}}
{\left\{4\,\sigma_{p;\mathrm{o}}\cos^{2}\frac{\varphi}{2}
+\sigma_{q;\mathrm{o}}\left[\Omega\,t+\sin(\Omega\,t+\varphi)-\sin\varphi\right]\right\}^{2}}
\geq\,0
\end{eqnarray}
and
\begin{eqnarray}
\label{}
\partial_{q}^{2}S=
\frac{\beta\,\cos^{2}\frac{\varphi}{2}}{\sigma_{q;\mathrm{o}}^{2}\,\cos^{2}\frac{\Omega\,t+\varphi}{2}}
+\frac{\left(\partial_{p}\partial_{q}S\right)^{2}}{\partial_{p}^{2}S}\,\geq\,0
\end{eqnarray}
The equations for the coefficients of the first degree monomials yield
\begin{eqnarray}
\label{}
\mu_{q:t}=\mu_{q;\mathrm{f}}\frac{t}{\tf}
\end{eqnarray}
and
\begin{eqnarray}
\label{}
a_{2;t}=\frac{\mu_{q;\mathrm{f}}\,\tau\,\left(2+\Omega\,t\,\tan\frac{\Omega\,t+\varphi}{2}\right)}{2\,\tf}
-\mu_{p;t}\mathsf{A}_{22;t}-\frac{y_{t}\,\mu_{q;t}}{\beta}\,\partial_{p}\partial_{q}S
\end{eqnarray} 
The remaining independent equations determine $a_{1;t}$ 
as a functional of $\partial_{p}\partial_{q}S$ and $\mu_{p;t}$ and their time derivatives. 
We do not need, however, the explicit expression to compute the entropy 
production for which we find
\begin{eqnarray}
\label{ex:ep1}
\frac{\mathcal{E}_{\hoz}}{\beta}=\frac{\mu_{q;\mathrm{f}}^{2}\,\tau}{\tf}
+\frac{\sigma_{q;\mathrm{o}}^{2}\,\Omega^{2}\,\tau\,\tf}{4\,\beta\,\cos^{2}\frac{\varphi}{2}}
\end{eqnarray}
Four properties of the extremal value of the entropy production (\ref{ex:ep1})
are worth emphasizing. First (\ref{ex:ep1}) is fully specified by the boundary conditions
and by the degrees of freedom fixed by the extremal equations (\ref{Occam:eqs}). This fact is 
an a-posteriori evidence of the degeneration of the extremal protocols.
Second, (\ref{ex:ep1}) does not depend upon the expected value of the momentum variable but only 
upon its variance. The third property is that, (\ref{ex:ep1}) corresponds 
to a \emph{constant} entropy production rate over the transition horizon.
This phenomenon is reminiscent of the Langevin--Smoluchowski case where the entropy production 
coincides with the kinetic energy of the current velocity so that the optimal value is 
attained along free--streaming trajectories.
The fourth property pertains the limit of infinite time horizon $\tf\uparrow\infty$.
The position variable variance remains finite in such a limit if $\Omega\,\tf$ is finite.
This lead us to infer generically a $1/\tf$ decay of the entropy production in such limit.
  
The explicit dependence of (\ref{ex:ep1}) on the boundary conditions
can be obtained in several special cases.

\subsection{$\sigma_{q;\mathrm{o}}=\sigma_{q;\mathrm{f}}$ }
\label{sec:ex:azero}

The condition is satisfied for
\begin{eqnarray}
\label{ex:azero:y}
\Omega=y_{t}=0
\end{eqnarray}
in the control horizon. Correspondingly, (\ref{ex:sigmap}) yields
\begin{eqnarray}
\label{ex:azero:sigmap}
\sigma_{p;\mathrm{f}}=\frac{\sigma_{p;\mathrm{o}}}
{\cos\theta_{\mathrm{f}}}
\end{eqnarray}
If $\theta_{\mathrm{f}}\neq 0$, (\ref{ex:azero:sigmap}) states that, while enforcing
(\ref{ex:azero:y}) we can use $\partial_{p}\partial_{q}S$ to steer the system to a final state
with larger momentum variance and non-vanishing correlation between position
and momenta.
For vanishing $\Omega$ the entropy production is determined by the variation 
of the position average:
\begin{eqnarray}
\label{}
\frac{\mathcal{E}_{\tf,0}}{\beta}=\frac{\mu_{q;\mathrm{f}}^{2}\,\tau}{\tf}
\end{eqnarray}
Correspondingly, the non-equilibrium Helmholtz energy and the stochastic
entropy can be couched into the form
\begin{subequations}
\label{ex:azero:sol}
\begin{eqnarray}
\label{azero:A}
\hspace{-1.4cm}
\lefteqn{
A=\frac{\mu_{q;\mathrm{f}}\,\tau\,p}{\tf}+
}
\nonumber\\&&
\frac{\sigma_{p;\mathrm{o}}\,q^{2}}{2\,\sigma_{q;\mathrm{o}}}\,\tau\,
\dfrac{\mathrm{d}}{\mathrm{d}t}\tanh\theta_{t}
-\frac{\tau\,q}{\tf}\left(\mu_{q;\mathrm{f}}
+\tf\,\dot{\mu}_{p;t}+\frac{\mu_{q;\mathrm{f}}\,\sigma_{p;\mathrm{o}}}{\sigma_{q;\mathrm{o}}}\,t\,
\frac{\mathrm{d}}{\mathrm{d} t}\tanh\theta_{t}\right)
\end{eqnarray}
\begin{eqnarray}
\label{azero:S}
\hspace{-1.4cm}
\lefteqn{
S=\frac{\beta\,(p-\mu_{p;t})^{2}}{2\,\sigma_{p;\mathrm{o}}^{2}}
+\frac{\beta}{2\,\sigma_{q;\mathrm{o}}^{2}\,\cos^{2}\theta_{t}}\left(q-\frac{\mu_{q;\mathrm{f}}\,t}{\tf}\right)^{2}
}
\nonumber\\&&
\hspace{-0.6cm}
-\beta\frac{\tanh\theta_{t}}{\sigma_{p;\mathrm{o}}\,\sigma_{q;\mathrm{o}}} 
\left(p-\mu_{p;t}\right)\left(q-\frac{\mu_{q;\mathrm{f}}\,t}{\tf}\right)
-\ln\frac{1}{2\,\pi\,\sigma_{p;\mathrm{o}}\,\sigma_{q;\mathrm{o}}}
\end{eqnarray}
\end{subequations} 
with $\tanh\theta_{t}$, $\sigma_{q;t}$ and $\dot{\mu}_{p;t}$ \emph{arbitrary} differentiable functions matching the boundary conditions. 
If we add the requirement $\theta_{\mathrm{f}}=\theta_{t}=0$ (\ref{ex:azero:sol}) shows that a transition changing only the 
mean value of the position variable requires a quadratic additive Hamiltonian i.e. of the form
 $H(p,q,t)=H_{p}(p,t)+H_{q}(q,t)$ where $H_{p}(p,t)$ must, however, include a 
\emph{linear momentum dependence}.

\subsection{Small one-parameter variation of the statistics}
\label{sec:ex:eps}

Let us suppose that there exists an non-dimensional parameter $\varepsilon$
such that the elements of the correlation matrix of the final state admit
an expansion of the form
\begin{subequations}
\begin{eqnarray}
\label{}
\sigma_{p;\mathrm{f}}=\sigma_{p;\mathrm{o}}
\left[1+\mathsf{p}_{1}\varepsilon+\frac{\mathsf{p}_{2}\varepsilon^{2}}{2}
+\frac{\mathsf{p}_{3}\varepsilon^{3}}{6}+O(\varepsilon^4)\right]
\end{eqnarray}
\begin{eqnarray}
\label{}
\sigma_{q;\mathrm{f}}=\sigma_{q;\mathrm{o}}\left[1
+\mathsf{q}_{1}\varepsilon+\frac{\mathsf{q}_{2}\varepsilon^{2}}{2}
+\frac{\mathsf{q}_{3}\varepsilon^{3}}{6}+O(\varepsilon^4)\right]
\end{eqnarray}
\end{subequations}
with
\begin{eqnarray}
\label{}
\cos\theta_{\mathrm{f}}=1-\frac{w_{2}^{2}\,\varepsilon^{2}}{2}+O(\varepsilon^{4})
\end{eqnarray}
Under the foregoing hypothesis we find 
\begin{subequations}
\label{ex:eps:phafre}
\begin{eqnarray}
\label{ex:eps:omega}
\hspace{-1.6cm}
\Omega=\frac{2\,(\mathsf{q}_{1}+\mathsf{p}_{1})\,\sigma_{p;\mathrm{o}}\,\varepsilon}{\tf\,\sigma_{q;\mathrm{o}}}+
\frac{2\,\left(\mathsf{p}_{2}-w_{2}^{2}+\mathsf{q}_{2}-2\,\mathsf{q}_{1}^{2}\right)\,\sigma_{p;\mathrm{o}}\varepsilon^{2}}
{\tf\,\sigma_{q;\mathrm{o}}}+O(\varepsilon^{3})
\end{eqnarray}
\begin{eqnarray}
\label{ex:eps:varphi}
\hspace{-1.6cm}
\lefteqn{
\varphi=-2\,\mathrm{arccot}\frac{(\mathsf{q}_{1}+\mathsf{p}_{1})\,\sigma_{p;\mathrm{o}}}{\mathsf{q}_{1}\,\sigma_{q;\mathrm{o}}}
}
\nonumber\\&&
\hspace{-1.4cm}
-\frac{(\mathsf{q}_{1}+\mathsf{p}_{1})^{3}\,\sigma_{p;\mathrm{o}}^{2}
+\left[\mathsf{q}_{1}(w_{2}^{2}+2\,\mathsf{q}_{1}^{2})
+\mathsf{p}_{1}\mathsf{q}_{2}-\mathsf{p}_{2}\mathsf{q}_{1}\right]\sigma_{q;\mathrm{o}}^{2}}
{\left[(\mathsf{q}_{1}+\mathsf{p}_{1})^{2}\sigma_{q;\mathrm{o}}^{2}
+\mathsf{q}_{1}^{2}\sigma_{q;\mathrm{o}}^{2}\right]\sigma_{q;\mathrm{o}}}
\,\sigma_{p;\mathrm{o}}\,\varepsilon+O(\varepsilon^{2})
\end{eqnarray}
\end{subequations}
which give for the entropy production
\begin{eqnarray}
\label{}
\hspace{-1.6cm}
\lefteqn{
\frac{\mathcal{E}_{\tf,0}}{\beta}=\frac{\mu_{q;\mathrm{f}}^{2}\,\tau}{\tf}+
\frac{\left[\left(\mathsf{p}_{1}+\mathsf{q}_{1}\right)^{2}\,\sigma_{p;\mathrm{o}}^{2}
+\mathsf{q}_{1}^{2}\,\sigma_{q;\mathrm{o}}^{2}\right]\,\tau\,\varepsilon^{2}}{\beta\,\tf}
}
\nonumber\\&&
\hspace{-1.4cm}
+\frac{\left[(\mathsf{p}_{1}+\mathsf{q}_{1})\left(\mathsf{p}_{2}
+\mathsf{q}_{2}+(\mathsf{p}_{1}-\mathsf{q}_{1})\,\mathsf{q}_{1}-w_{2}^{2}\right)\,\sigma_{p;\mathrm{o}}^{2}
+\mathsf{q}_{2}\,\mathsf{q}_{1}\sigma_{q;\mathrm{o}}^{2}
\right]
\,\tau\,\varepsilon^{3}}{\beta\,\tf}+O(\varepsilon^{4})
\end{eqnarray}
It is interesting to explore the consequence of this formula in three sub-cases.
For this purpose we introduce the non-dimensional parameter
\begin{eqnarray}
\label{ex:lambda}
\lambda=\frac{\sigma_{p;\mathrm{o}}}{\sigma_{q;\mathrm{o}}}
\end{eqnarray}
measuring the scale separation between momentum and position fluctuations.

\subsubsection{$\mathsf{q}_{n}=\mathsf{p}_{n}=0$ $\forall\,n\,>\,1$}~
\label{sec:ex:sp}

Both the position and the momentum variances are linear in $\varepsilon$. 
We can therefore recast the expansion of the entropy production directly
in terms of the change of the variances across the control horizon.
We obtain 
\begin{eqnarray}
\label{}
\hspace{-2.0cm}
\lefteqn{
\frac{\mathcal{E}_{\tf,0}}{\beta}=\frac{\mu_{q;\mathrm{f}}^{2}\,\tau}{\tf}
+\frac{(\sigma_{q;\mathrm{f}}-\sigma_{q;\mathrm{o}})^{2}\,\tau}{\beta\,\tf}
}
\nonumber\\
\hspace{-1.8cm}
+\frac{[\sigma_{p;\mathrm{f}}-\sigma_{p;\mathrm{o}}
+\lambda\,(\sigma_{q;\mathrm{f}}-\sigma_{q;\mathrm{o}})]^{2}\tau}{\beta\,\tf}
+\frac{[(\sigma_{p;\mathrm{f}}-\sigma_{p;\mathrm{o}})^{2}-\lambda^{2}\,(\sigma_{q;\mathrm{f}}-\sigma_{q;\mathrm{o}})^{2}]
(\sigma_{q;\mathrm{f}}-\sigma_{q;\mathrm{o}})
\,\tau}{\beta\,\sigma_{q;\mathrm{o}}\,\tf}
\nonumber\\
\hspace{-1.8cm}
+\frac{2\,\lambda\,[\sigma_{p;\mathrm{f}}-\sigma_{p;\mathrm{o}}+\lambda\,(\sigma_{q;\mathrm{f}}-\sigma_{q;\mathrm{o}})]
\left(1-\cos\theta_{\mathrm{f}}\right)\,\tau}{\beta\,\sigma_{q;\mathrm{o}}\,\tf}
+h.o.t
\end{eqnarray}

\subsubsection{$\theta_{\mathrm{f}}=\sigma_{p;\mathrm{f}}-\sigma_{p;\mathrm{o}}=0$ and $\varepsilon=(\sigma_{q;\mathrm{f}}-\sigma_{q;\mathrm{o}})/\sigma_{q;\mathrm{o}}$} ~
\label{sec:ex:sq}

Under these hypotheses, the marginal momentum distribution in the final state 
coincides with that of the initial state. As a result, the expansion of the phase 
$\varphi$ starts from the neighborhood of $\pi/2$. The entropy production reduces to
\begin{eqnarray}
\label{}
\hspace{-1.6cm}
\frac{\mathcal{E}_{\tf,0}}{\beta}=\frac{\mu_{q;\mathrm{f}}^{2}\,\tau}{\tf}
+\frac{\tau\,(1+\lambda^{2})\,(\sigma_{q;\mathrm{f}}-\sigma_{q;\mathrm{f}})^{2}}{\beta\,\tf}
-\frac{\tau\,\lambda^{2}\,(\sigma_{q;\mathrm{f}}-\sigma_{q;\mathrm{o}})^{3}}{\beta\,\sigma_{q;\mathrm{o}}\,\tf}
+O(\sigma_{q;\mathrm{f}}-\sigma_{q;\mathrm{o}})^{4}
\end{eqnarray}

\begin{figure}[ht]
\centering
\subfigure[Momentum variance $\sigma_{p;t}^{2}$]{
\includegraphics[width=4.2cm]{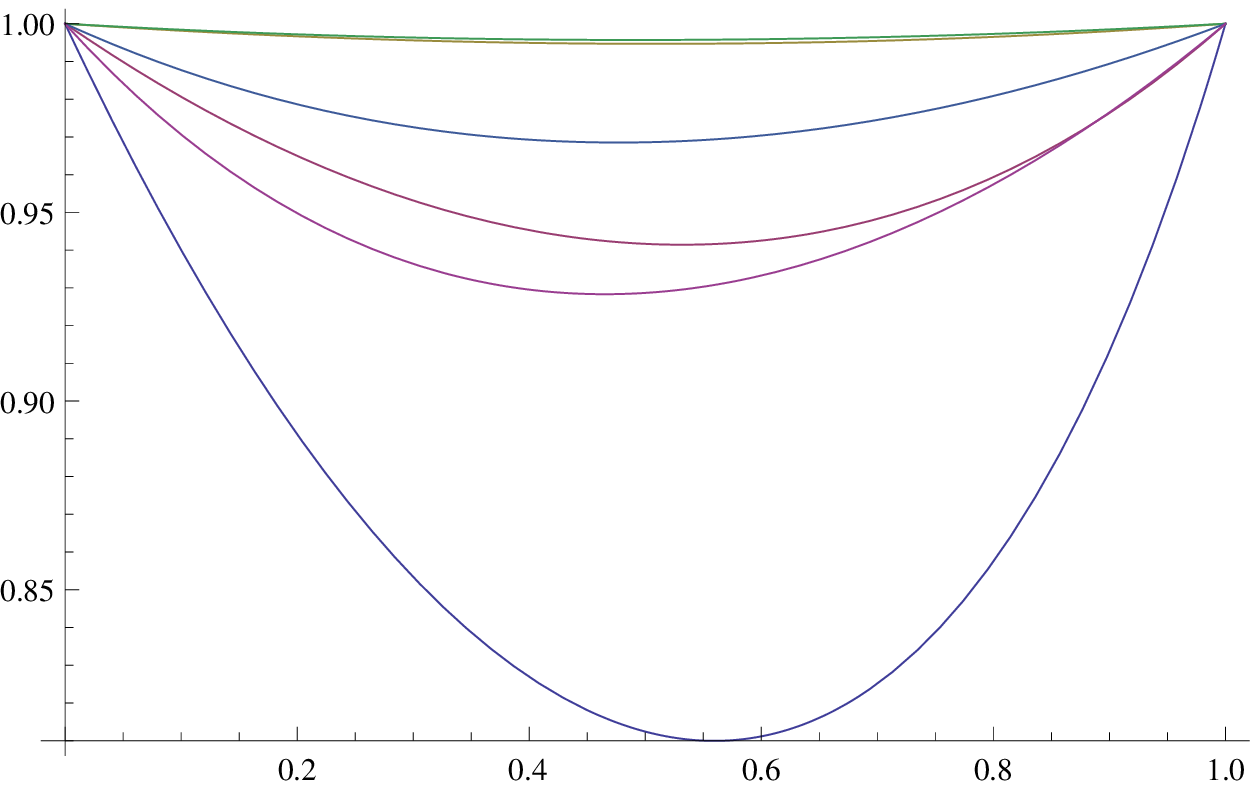}
\label{fig:var_mom}
}
\hspace{0.8cm}
\subfigure[Position variance $\sigma_{q;t}^{2}$ for $\partial_{q}\partial_{p}S=0$ ]{
\includegraphics[width=4.2cm]{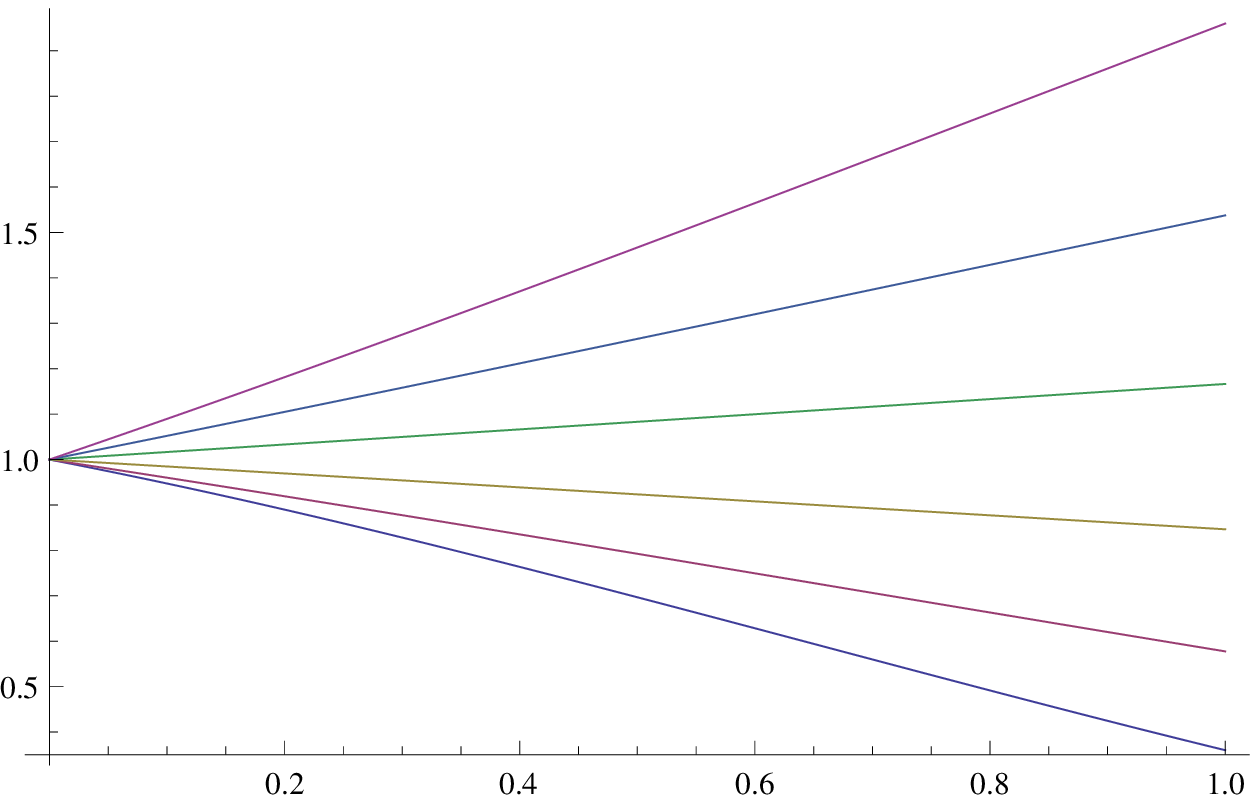}
  \label{fig:var_pos}
}
\hspace{0.8cm}
\subfigure[Contribution to the entropy production by a change of position variance]{
  \includegraphics[width=4.2cm]{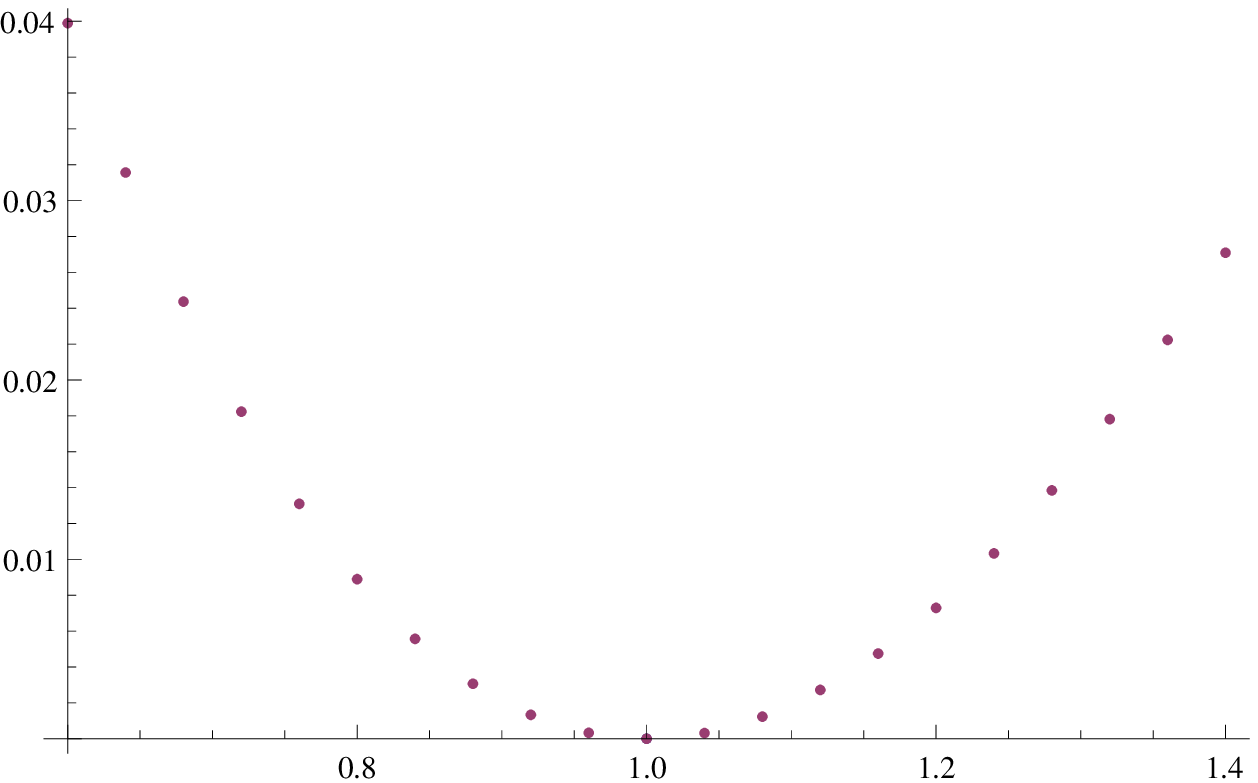}
  \label{fig:entropy}
}
\setlength{\unitlength}{0.1cm}
\begin{picture}{(0,0)}
\put(-136,5){$\scriptscriptstyle{\delta=-0.4}$}
\put(-136,19){$\scriptscriptstyle{\delta=-0.24}$}
\put(-148,26){$\scriptscriptstyle{\delta=-0.08}$}
\put(-122,26){$\scriptscriptstyle{\delta=0.08}$}
\put(-136,22){$\scriptscriptstyle{\delta=0.24}$}
\put(-136,16){$\scriptscriptstyle{\delta=0.4}$}
\put(-55,1){$\scriptscriptstyle{\delta=-0.4}$}
\put(-55,4){$\scriptscriptstyle{\delta=-0.24}$}
\put(-55,8){$\scriptscriptstyle{\delta=-0.08}$}
\put(-55,14){$\scriptscriptstyle{\delta=0.08}$}
\put(-55,20){$\scriptscriptstyle{\delta=0.24}$}
\put(-55,26){$\scriptscriptstyle{\delta=0.4}$}
\end{picture}
\caption{Evolution of the fluctuation variances for $\tf=1$ and $\tau=0.1$. As in 
subsection~\ref{sec:ex:sq} we assume $\sigma_{p;\mathrm{o}}=\sigma_{p;\mathrm{f}}=\sigma_{q;\mathrm{o}}=1$ and 
$\theta_{\mathrm{f}}=0$. Correspondingly, we plot fig.~\ref{fig:var_pos} after 
imposing $\partial_{q}\partial_{p}S=0$ in the control horizon. The plots correspond to different 
values of $\delta=\sigma_{q;\mathrm{f}}-1$.
In fig~\ref{fig:var_mom} the constant value of $\sigma_{p;t}$ corresponds to $\sigma_{q;\mathrm{f}}=1.0$. 
The larger the absolute value of the deviation 
from unity of $\sigma_{q;\mathrm{f}}$, the larger the departure of $\sigma_{p;t}$ from the unity \emph{inside} 
the control horizon. In fig~\ref{fig:entropy} we plot the corresponding 
values of $\mathcal{E}_{\tf,0}-\beta \mu_{q;\mathrm{f}}^{2}\tau/\tf$
}
\label{fig:ex:sp}
\end{figure}
In fig.~\ref{fig:ex:sp} we report  the behavior of the momentum and position variance 
for vanishing cross-correlation. It is worth emphasizing that the momentum variance 
does not remain constant during the control horizon unless
\begin{eqnarray}
\label{}
\sigma_{q;\mathrm{f}}=\sigma_{q;\mathrm{o}}=\Omega=0
\end{eqnarray}

\subsubsection{``Over-damped'' limit: 
$\theta_{\mathrm{f}}=\sigma_{p;\mathrm{f}}-\sigma_{p;\mathrm{o}}=0$ and 
$\varepsilon=(\sigma_{q;\mathrm{f}}-\sigma_{q;\mathrm{o}})/\sigma_{q;\mathrm{o}}$ for $\lambda\,\ll \,1$}~
\label{sec:ex:od}

At variance with the foregoing~\label{sec:ex:sq} we now assume a wide scale separation 
between the position and momentum variance. We readily see from  (\ref{ex:eps:phafre}) 
that $(\varphi\,,\Omega)=(-\pi+O(\lambda)\,,O(\lambda))$. We can solve the boundary condition equations 
(\ref{ex:var_bc}) in the limit of vanishing 
$\lambda$ up to all order accuracy in $\varepsilon$:
\begin{subequations}
\label{ex:od:var}
\begin{eqnarray}
\label{}
\Omega=\frac{2\,\lambda\,\varepsilon}{\tf}\left\{1-\varepsilon+\frac{2\,\varepsilon^{2}}{3}+O(\varepsilon^{3})\right\}
\overset{\lambda\downarrow 0}{\to}\frac{6 \,\lambda\, \varepsilon}
{\tf\,\left[3+\varepsilon\,(3+\varepsilon)\right]}+o(\lambda)
\end{eqnarray}
\begin{eqnarray}
\label{}
\varphi=-\pi+2\,\lambda\left\{1-\varepsilon+\frac{2\,\varepsilon^{2}}{3}+O(\varepsilon^{3})\right\}
\overset{\lambda \downarrow 0}{\to}-\pi+\frac{\Omega\,\tf}{\varepsilon}+o(\lambda)
\end{eqnarray}
\end{subequations}
The corresponding value of the entropy production is
\begin{eqnarray}
\label{ex:od:entropy}
\frac{\mathcal{E}_{\tf,0}}{\beta}=\frac{\mu_{q;\mathrm{f}}^{2}\,\tau}{\tf}
+\frac{(\sigma_{q;\mathrm{f}}-\sigma_{q;\mathrm{o}})^{2}}{\beta\,\tf}+
o\left(\lambda\right)
\end{eqnarray}
We notice that this is exactly the entropy production by a transition governed
by a Langevin--Smoluchowski dynamics between Gaussian states \cite{AuMeMG12,AuGaMeMoMG12,PMG13}. 
Indeed, the regime we are considering here corresponds to the ``\emph{over-damped}'' asymptotics 
of the Langevin--Kramers dynamics. 
Namely, upon inserting (\ref{ex:od:var}) into the quadratic Ans\"atze for the
non-equilibrium Helmholtz energy and microscopic entropy densities we get into
\begin{subequations}
\begin{eqnarray}
\label{ex:od:Aderq}
\left.(\partial_{q}A)(0,q,t)\right|_{\mu_{p;t}=0}=
-\frac{\mu_{q;\mathrm{f}}\,+\frac{q\,(\sigma_{q;\mathrm{f}}-\sigma_{q;\mathrm{o}})}{\sigma_{q;\mathrm{o}}}}
{1+\frac{t\,\left(\sigma_{q;\mathrm{f}}-\sigma_{q;\mathrm{o}}\right)}{\tf\,\sigma_{q;\mathrm{o}}}}\frac{\tau}{\tf}
+o\left(\lambda\right)
\end{eqnarray}
\begin{eqnarray}
\label{ex:od:Aderp}
\left.(\partial_{p}A)(0,q,t)\right|_{\mu_{p;t}=0}=-\left.(\partial_{p}A)(0,q,t)\right|_{\mu_{p;t}=0}+o\left(\lambda\right)
\end{eqnarray}
\begin{eqnarray}
\label{ex:od:Sderq}
(\partial_{q}S)(0,q,t)=\frac{\beta\,\left(q-\frac{\mu_{q;\mathrm{f}}\,t}{\tf}\right)}
{\sigma_{q;\mathrm{o}}^{2}\,\left[1+\frac{t\,(\sigma_{q;\mathrm{f}}-\sigma_{q;\mathrm{o}})}{\tf\,\sigma_{q;\mathrm{o}}}\right]^{2}}
+o(\lambda)
\end{eqnarray}
\end{subequations}
We see that the $\lambda$-independent parts of 
(\ref{ex:od:Aderq}) and (\ref{ex:od:Sderq}) coincide with the values obtained
for the same quantities in the Langevin--Smoluchowski case \cite{AuMeMG12,AuGaMeMoMG12,PMG13}. 
In the forthcoming section, we will show that (\ref{Occam:eqs}) encapsulate also in general the results 
obtained for Langevin--Smoluchowski dynamics. 
In particular, the equality (\ref{ex:od:Aderp}) guarantees that an homogenization theory ``centering condition''
holds for the Gaussian model so that the Langevin--Smoluchowski dynamics
is recovered as the solution of a suitable ``cell problem'' \cite{PaSt08}.
\begin{figure}[ht]
\centering
\subfigure[Relative variation of the momentum variance for $\sigma_{p;\mathrm{o}}=\lambda=1$]{
\includegraphics[width=5.0cm]{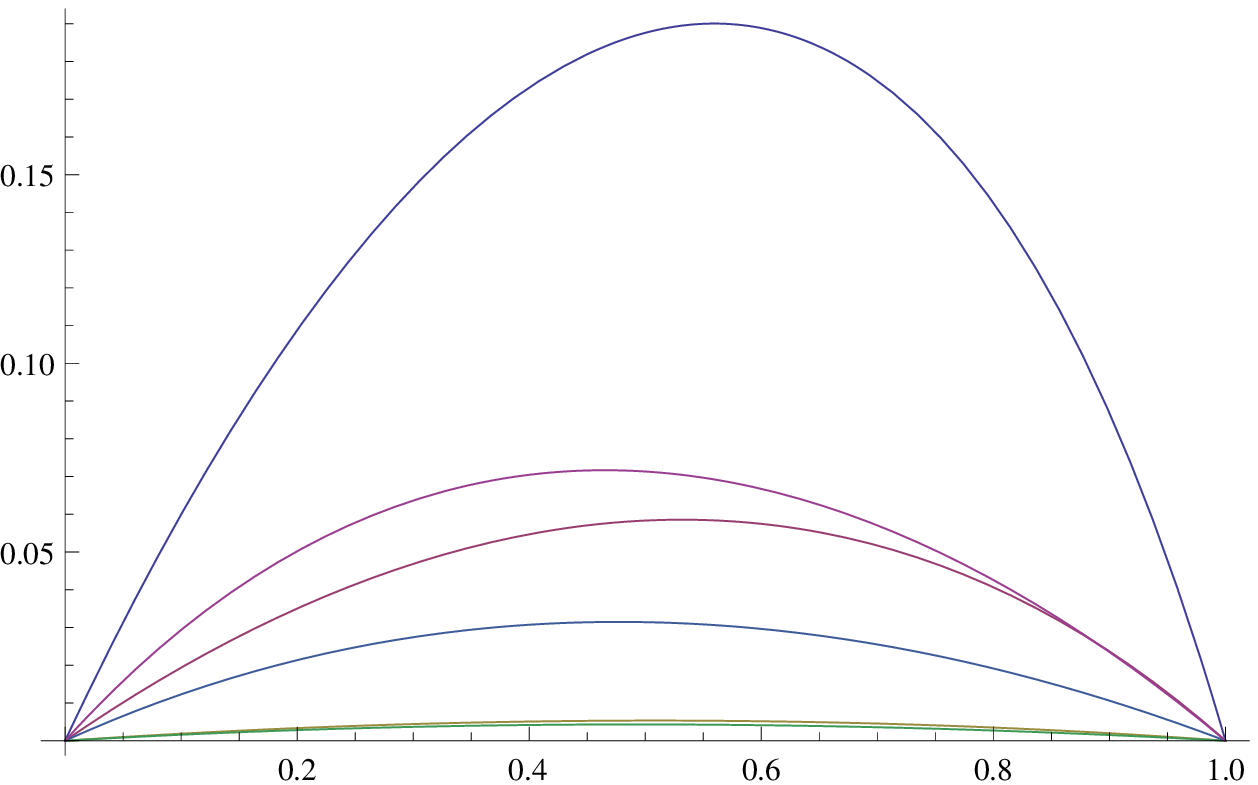}
\label{fig:relvar_sq}
}
\hspace{1.0cm}
\subfigure[Relative variation of the momentum variance for $\sigma_{p;\mathrm{o}}=\lambda=0.1$ ]{
\includegraphics[width=5.0cm]{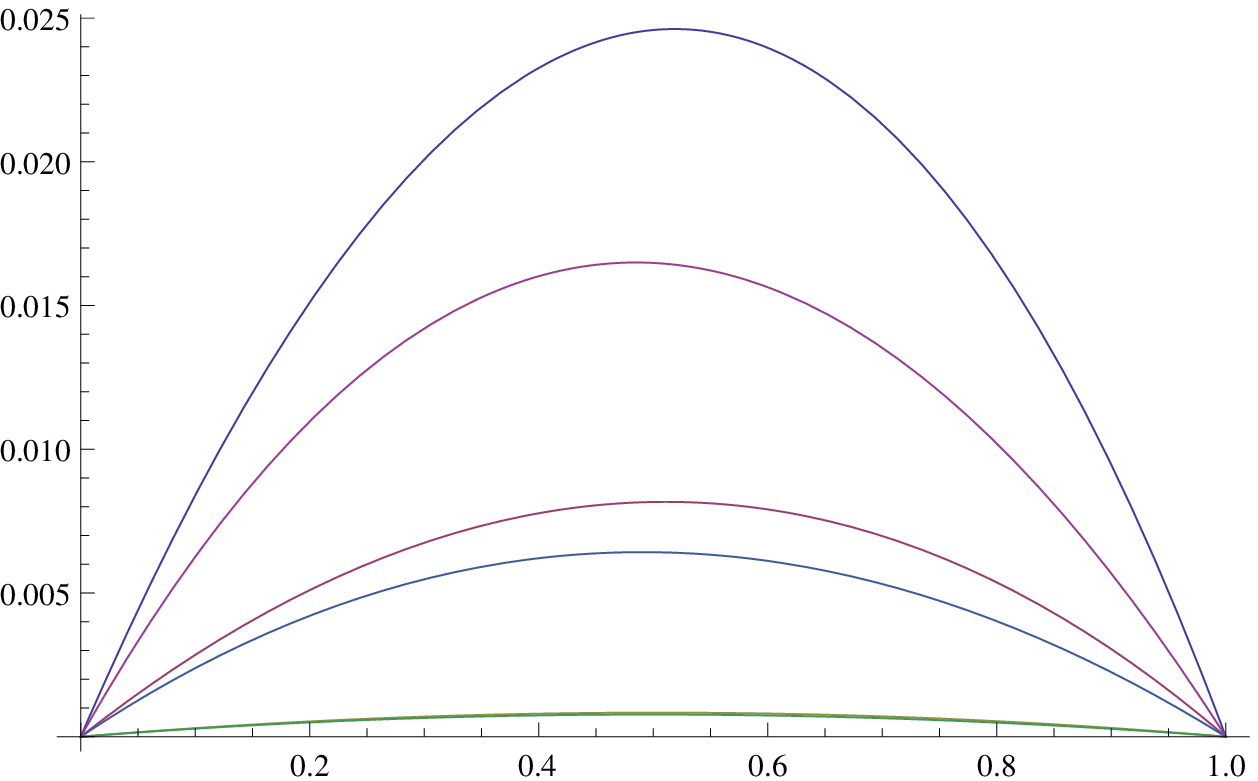}
  \label{fig:relvar_od}
}
\setlength{\unitlength}{0.1cm}
\begin{picture}{(0,0)}
\put(-94,26){$\scriptscriptstyle{\delta=-0.4}$}
\put(-94,9){$\scriptscriptstyle{\delta=-0.24}$}
\put(-108,3){$\scriptscriptstyle{\delta=-0.08}$}
\put(-82,3){$\scriptscriptstyle{\delta=0.08}$}
\put(-94,5){$\scriptscriptstyle{\delta=0.24}$}
\put(-94,14){$\scriptscriptstyle{\delta=0.4}$}
\put(-30,26){$\scriptscriptstyle{\delta=-0.4}$}
\put(-30,9){$\scriptscriptstyle{\delta=-0.24}$}
\put(-44,3){$\scriptscriptstyle{\delta=-0.08}$}
\put(-18,3){$\scriptscriptstyle{\delta=0.08}$}
\put(-30,5){$\scriptscriptstyle{\delta=0.24}$}
\put(-30,14){$\scriptscriptstyle{\delta=0.4}$}
\end{picture}
\caption{Relative variation $|\sigma_{p;t}^{2}-\sigma_{p;\mathrm{o}}^{2}|/\sigma_{p;\mathrm{o}}^{2}$ 
for $\sigma_{p;\mathrm{o}}=1$ fig.~\ref{fig:relvar_sq} and $\sigma_{p;\mathrm{o}}=0.1$ fig.~\ref{fig:relvar_sq}.
In both cases $\sigma_{q;\mathrm{o}}=1$ so that $\sigma_{p;\mathrm{o}}=\lambda$.
The other parameters as in fig.~\ref{fig:ex:sq}. As $\lambda$ decreases, the
approximation of the marginal momentum distribution by an ``equilibrium'' distribution improves its accuracy
while remaining not uniform inasmuch the deviation increases with $(\sigma_{q;\mathrm{f}}-\sigma_{q;\mathrm{o}})^{2}$
and reaches a maximum for $t\sim \tf/2$. 
}
\label{fig:ex:sq}
\end{figure}

\section{``Over-damped'' asymptotics}
\label{sec:multi}

In the presence of a wide separation of between the characteristic scales of the 
momentum and position variables, the Langevin--Smoluchowski or ``over-damped'' dynamics often
provides a good approximation to the Langevin--Kramers dynamics.
The entropy production by a smooth Langevin--Smoluchowski dynamics attains a minimum value
if the control potential obeys a Monge--Amp\`ere--Kantorovich dynamics \cite{AuMeMG11,AuMeMG12,AuGaMeMoMG12,PMG13}.
In this last section our aim is to investigate in which sense we can recover from
(\ref{Occam:eqs}) the results
previously established for Langevin-Smoluchowski dynamics. To address this question, we suppose 
that the probability densities of the initial and final states take the additive form
\begin{eqnarray}
\label{multi:ini}
\mathtt{m}_{\mathrm{o}}(\boldsymbol{p},\boldsymbol{q})=\left(\frac{\beta}{2\,\pi\,\lambda}\right)^{d}
\exp\left\{-\beta\,\frac{\parallel\boldsymbol{p}\parallel_{\id_{d}}^{2}}{2\,\lambda^{2}}
-\beta\,U_{\mathrm{o}}(\boldsymbol{q})\right\}
\end{eqnarray}
and
\begin{eqnarray}
\label{multi:fin}
\mathtt{m}_{\mathrm{f}}(\boldsymbol{p},\boldsymbol{q})=\left(\frac{\beta}{2\,\pi\,\lambda}\right)^{d}
\exp\left\{-\beta\,\frac{\parallel\boldsymbol{p}\parallel_{\id_{d}}^{2}}{2\,\lambda^{2}}
-\beta\,U_{\mathrm{f}}(\boldsymbol{q})\right\}
\end{eqnarray}
with $\lambda\,\ll\,1$ a non-dimensional parameter generalizing (\ref{ex:lambda}) in order to describe 
the scale separation between momentum and position dynamics.

Multi-scale perturbation theory (often also referred to as homogenization theory 
see e.g. \cite{BeLiPa78,PaSt08}) in powers of $\lambda$ equips us with the tools to extricate 
the \emph{asymptotic expression} of solutions of (\ref{Occam:eqs}) for $\beta\,\parallel\boldsymbol{p}\parallel_{\id_{d}}^{2}\ll\lambda\ll\,1$ 
in the form
\begin{eqnarray}
\label{multi:expansion}
\hspace{-1.6cm}
A(\boldsymbol{x},t)=\sum_{i=0}^{2}\,\lambda^{i}\,
A_{(i)}\left(\frac{\boldsymbol{p}}{\lambda},\boldsymbol{q},t\dots\right)
+o(\lambda^{2}):=\tilde{A}\left(\tilde{\boldsymbol{p}},\boldsymbol{q},t\dots\right)
\end{eqnarray}
and similarly for $S$ and $V$.  The $\dots$ in (\ref{multi:expansion}) portend the scales
which we eventually neglect in the asymptotics.     
Once we availed us of (\ref{multi:expansion}), the extremal equations (\ref{Occam:eqs}) 
become
\begin{subequations}
\label{multi:eqs}
\begin{eqnarray}
\label{multi:extremal}
\hspace{-1.2cm}
\frac{1}{\lambda}\brl \tilde{S}\,,\tilde{V}\brr_{\mathsf{J}^{\dagger}}^{\sim}
+\frac{1}{\lambda^{2}\,\beta}\tilde{\mathfrak{S}}(\tilde{V}-2\,\tilde{A})
=0
\end{eqnarray}
\begin{eqnarray}
\label{multi:value}
\hspace{-1.2cm}
\tau\,\partial_{t}\tilde{V}+\frac{1}{\lambda}\brl \tilde{A}\,,\tilde{V}\brr_{\mathsf{J}^{\dagger}}^{\sim}
-\frac{1}{\lambda^{2}}\left(\partial_{\tilde{\boldsymbol{p}}}\tilde{A}\right)\overset{d}{\cdot}\partial_{\tilde{\boldsymbol{p}}}
\left(\tilde{V}-\tilde{A}\right)
=0
\end{eqnarray}
\begin{eqnarray}
\label{multi:entropy}
\hspace{-1.2cm}
\tau\,\partial_{t}\tilde{S}
+\frac{1}{\lambda}\brl \tilde{A}\,,\tilde{S}\brr_{\mathsf{J}^{\dagger}}^{\sim}
+\frac{1}{\lambda^{2}\,\beta}\tilde{\mathfrak{S}}\tilde{A}=0
\end{eqnarray}
\end{subequations}
where we used the notation $\tilde{\boldsymbol{x}}=\boldsymbol{q}\oplus\boldsymbol{\tilde{p}}$
\begin{eqnarray}
\label{}
\brl \tilde{S}\,,\tilde{V}\brr_{\mathsf{J}^{\dagger}}^{\sim}
\equiv (\partial_{\boldsymbol{\tilde{x}}}\tilde{S})\cdot
\left(\mathsf{J}^{\dagger}\cdot\partial_{\boldsymbol{\tilde{x}}}\tilde{V}\right)
=(\partial_{\tilde{\boldsymbol{p}}}\tilde{S})\overset{d}{\cdot}\partial_{\boldsymbol{q}}\tilde{V}
-(\partial_{\boldsymbol{q}}\tilde{S})\overset{d}{\cdot}\partial_{\tilde{\boldsymbol{p}}}\tilde{V}
\end{eqnarray}
and
\begin{eqnarray}
\label{multi:WiLa}
\tilde{\mathfrak{S}}:=-(\partial_{\tilde{\boldsymbol{p}}}\tilde{S})\overset{d}{\cdot}\partial_{\tilde{\boldsymbol{p}}}
+\id_{d}:\partial_{\tilde{\boldsymbol{p}}}\otimes\partial_{\tilde{\boldsymbol{p}}}
\end{eqnarray}
In what follows, we will also write $\tilde{\mathfrak{S}}^{(0)}$ to denote the replacement 
in (\ref{multi:WiLa}) of $\tilde{S}$ with its zeroth order approximation 
$\tilde{S}^{(0)}$.

As often occurs for homogenization of parabolic equation \cite{PaSt08}, 
we need to analyze the first three orders of the regular expansion in powers of
$\lambda$ in order to fully determine the leading order contributions to 
$S$ and $A$. This is because the first order is needed to assess the \emph{centering condition}
coupling the widely separated scales which we wish to resolve
in the asymptotics. The second order approximation uses the information conveyed by the 
centering condition to determine the \emph{cell problem}, a closed equation for the
effective dynamics in the limit of vanishing $\lambda$.

\subsection{Zeroth order}
\label{sec:multi:zero}

From (\ref{multi:extremal}) we get the condition
\begin{eqnarray}
\label{}
\tilde{\mathfrak{S}}^{(0)}\left(2\,A_{(0)}-V_{(0)}\right)=0
\end{eqnarray}
stating that at leading order the value function $V_{(0)}$ may differ from 
the non equilibrium Helmholtz energy at most by a function independent 
of momentum variables:
\begin{eqnarray}
\label{zero:v}
V_{(0)}=2\,A_{(0)}+V_{(0:0)}
\end{eqnarray}
where
\begin{eqnarray}
\label{}
\partial_{\tilde{\boldsymbol{p}}}V_{(0:0)}=0
\end{eqnarray} 
Once we inserted (\ref{zero:v}) in (\ref{multi:value}), (\ref{multi:entropy}) we get 
into
\begin{subequations}
\label{zero:MAK}
\begin{eqnarray}
\label{multi:zero1}
\partial_{\tilde{\boldsymbol{p}}}A_{(0)}\overset{d}{\cdot}\partial_{\tilde{\boldsymbol{p}}}A_{(0)}=0
\end{eqnarray}
\begin{eqnarray}
\label{multi:zero2}
\frac{1}{\beta}\tilde{\mathfrak{S}}^{(0)}A_{(0)}=0
\end{eqnarray}
\end{subequations}
The boundary conditions (\ref{multi:ini}), (\ref{multi:fin}) translate 
into
\begin{eqnarray}
\label{zero:neareq}
S_{\mathrm{o}}=S_{\mathrm{f}}+o(\lambda^{2})
\end{eqnarray}
Hence, we see from (\ref{zero:MAK}) that (\ref{zero:neareq}) is 
satisfied upon setting
\begin{eqnarray}
\label{zero:pdf}
S_{(0)}(\tilde{\boldsymbol{p}})=\frac{\parallel\tilde{\boldsymbol{p}}\parallel_{\id_{d}}^{2}}{2}
+S_{(0:0)}(\boldsymbol{q},t,\dots)
\end{eqnarray}
and
\begin{eqnarray}
\label{zero:neareq2}
\partial_{\tilde{\boldsymbol{p}}}A_{(0)}=\partial_{\tilde{\boldsymbol{p}}}V_{(0)}=0
\end{eqnarray}

\subsection{First order: centering condition}
\label{sec:multi:first}

Maxwell momentum distribution is the unique element of the kernel of $\tilde{\mathfrak{S}}^{(0)\dagger}$ in $\mathbb{L}^{2}(\mathbb{R}^{d})$. 
By Fredholm alternative (see e.g. \cite{PaSt08}) 
\begin{eqnarray}
\label{first:extremal}
\brl S_{(0)}\,,V_{(0)}\brr_{\mathsf{J}^{\dagger}}^{\sim}
-\frac{1}{\beta}\tilde{\mathfrak{S}}^{(0)}(2\,A_{(1)}-V_{(1)})=0
\end{eqnarray} 
admits a unique solution if and only if the solvability condition
\begin{eqnarray}
\label{first:solvability}
\lefteqn{
0=\int_{\mathbb{R}^{d}}\mathrm{d}^{d}p\,e^{-S_{(0)}}\brl S_{(0)}\,,V_{(0)}\brr_{\mathsf{J}^{\dagger}}^{\sim}=
}
\nonumber\\&&
-\int_{\mathbb{R}^{d}}\mathrm{d}^{d}p\,\brl e^{-S_{(0)}}\,,V_{(0)}\brr_{\mathsf{J}^{\dagger}}^{\sim}
=\partial_{\boldsymbol{q}}\int_{\mathbb{R}^{d}}\mathrm{d}^{d}p\,e^{-S_{(0)}}\partial_{\tilde{\boldsymbol{p}}}V_{(0)}
\end{eqnarray}
holds true which is always the case if (\ref{zero:neareq2})
is verified. Hence we conclude
\begin{eqnarray}
\label{first:v1}
V_{(1)}
=2\,\left(A_{(1)}+\tilde{\boldsymbol{p}}\cdot\partial_{\boldsymbol{q}}A_{(0)}\right)
+\tilde{\boldsymbol{p}}\cdot\partial_{\boldsymbol{q}}V_{(0:0)}+V_{(1:0)}
\end{eqnarray}
with
\begin{eqnarray}
\label{}
\partial_{\tilde{\boldsymbol{p}}}V_{(1:0)}=0
\end{eqnarray}
Turning to the value function equation (\ref{multi:value}), we see that 
\begin{eqnarray}
\label{}
\hspace{-1.0cm}
-\sum_{i=0}^{1}\partial_{\tilde{\boldsymbol{p}}}V_{(1-i)}\overset{d}{\cdot}\partial_{\tilde{\boldsymbol{p}}}A_{(i)}
+\brl A_{(0)}\,,V_{(0)}\brr_{\mathsf{J}^{\dagger}}^{\sim}
+2\,\partial_{\tilde{\boldsymbol{p}}}A_{(1)}
\overset{d}{\cdot}\partial_{\tilde{\boldsymbol{p}}}A_{(0)}=0
\end{eqnarray}
is also satisfied by (\ref{zero:neareq2}). New information
comes from the expansion of the microscopic entropy equation  
\begin{eqnarray}
\label{}
-\partial_{\boldsymbol{q}}A_{(0)}\cdot\partial_{\tilde{\boldsymbol{p}}}S_{(0)}+\tilde{\mathfrak{S}}^{(0)}A_{(1)}=0
\end{eqnarray}
which yields the \emph{centering condition} of the expansion: 
\begin{eqnarray}
\label{first:centering}
A_{(1)}=-\tilde{\boldsymbol{p}}\cdot\partial_{\boldsymbol{q}}A_{(0)}
\end{eqnarray}
The centering condition couples here a linear asymptotic behavior in $\boldsymbol{p}$ with
a non-trivial \emph{cell problem} in $(\boldsymbol{q},t)$ which we will determine by 
requiring solvability in the sense of Fredholm's alternative at order $O(\lambda^{2})$. 
Contrasting (\ref{first:centering}) with (\ref{first:v1}) we infer that 
\begin{eqnarray}
\label{}
\partial_{\tilde{\boldsymbol{p}}}V_{(1)}=\partial_{\boldsymbol{q}}V_{(0:0)}
\end{eqnarray}
It is worth here to emphasize the relevant simplification induced by the
over-damped limit. The over-damped limit entitled us to neglect the Poisson bracket 
also in the sub-leading order (\ref{first:extremal}) of the expansion
of (\ref{Occam:extremal}). The crucial consequence is that the relation 
between $V$ and $A$ remains \emph{local} within accuracy. Intermediate asymptotics 
around (\ref{zero:MAK}) do not enjoy this property. This is not surprising in light 
of example of section~\ref{sec:ex:sq} showing that, even in the Gaussian case,
the coincidence of the initial and final marginal momentum distribution does not 
imply in general thermalization.

\subsection{Second order: cell problem}
\label{sec:multi:second}

The extremal equation (\ref{multi:extremal}) yields now the condition
\begin{eqnarray}
\label{}
\hspace{-1.0cm}
\partial_{\tilde{\boldsymbol{p}}}S_{(0)}\overset{d}{\cdot}\partial_{\boldsymbol{q}}V_{(1)}
-\partial_{\boldsymbol{q}}S_{(0)}\overset{d}{\cdot}\partial_{\boldsymbol{q}}V_{(0,0)}
-\frac{1}{\beta}\tilde{\mathfrak{S}}^{(0)}(2\,A_{(2)}-V_{(2)})=0
\end{eqnarray}
The solvability condition imposes
\begin{eqnarray}
\label{second:vcomp}
\partial_{\boldsymbol{q}}V_{(0,0)}=0
\end{eqnarray}
whence
\begin{eqnarray}
\label{}
V_{(2)}=2\,A_{(2)}+\tilde{\boldsymbol{p}}\cdot\partial_{\boldsymbol{q}}V_{(1:0)}+V_{(2:0)}
\end{eqnarray}
with $\partial_{\tilde{\boldsymbol{p}}}V_{(2:0)}=0$. Hence, combining (\ref{zero:v}) with (\ref{second:vcomp}), 
(\ref{first:centering}) and (\ref{zero:neareq}), the value function equation reduces to
\begin{eqnarray}
\label{second:A}
2\,\tau\,\partial_{t}A_{(0)}
-\partial_{\boldsymbol{q}}A_{(0)}
\overset{d}{\cdot}\partial_{\boldsymbol{q}}A_{(0)}=0
\end{eqnarray}
Finally, the equation for the microscopic entropy is
\begin{eqnarray}
\label{}
\hspace{-1.6cm}
\partial_{t}S_{(0)}
-\partial_{\boldsymbol{q}}A_{(0)}\overset{d}{\cdot}\partial_{\boldsymbol{q}}S_{(0)}
+\tilde{\boldsymbol{p}}\cdot\partial_{\boldsymbol{q}}\otimes\partial_{\boldsymbol{q}}A_{(0)}\cdot\partial_{\tilde{\boldsymbol{p}}}S_{(0)}
+\frac{1}{\beta}\tilde{\mathfrak{S}}^{(0)}A_{(2)}=0
\end{eqnarray}
Regarding this latter as an equation for $A_{(2)}$ and invoking again Fredholm's alternative,
we see that it admits a unique solution if and only if 
\begin{eqnarray}
\label{second:S}
\tau\,\partial_{t}S_{(0:0)}-\partial_{\boldsymbol{q}}A_{(0)}\overset{d}{\cdot}\partial_{\boldsymbol{q}}S_{(0:0)}
+\frac{1}{\beta}\id_{d}:\partial_{\boldsymbol{q}}\otimes\partial_{\boldsymbol{q}}A_{(0)}=0
\end{eqnarray}
holds true. The system formed by the equalities (\ref{zero:pdf}), (\ref{first:centering}) and the 
\emph{cell problem} equations (\ref{second:A}), (\ref{second:S}) fully specifies the homogenization 
asymptotics we set out to derive. 

\subsection{Asymptotic expression of the entropy production}

The cell problem equations (\ref{second:A}), (\ref{second:S}) specify
a Monge--Amp\`ere--Kantorovich evolution \cite{Villani} 
between a initial configuration space state with density
\begin{eqnarray}
\label{multi:odini}
\tilde{\mathtt{m}}_{\mathrm{o}}(\boldsymbol{q})=\left(\frac{\beta}{2\,\pi}\right)^{d/2}\,e^{-\beta\,U_{\mathrm{o}}(\boldsymbol{q})}
\end{eqnarray}
and a final one with density
\begin{eqnarray}
\label{multi:odfin}
\tilde{\mathtt{m}}_{\mathrm{f}}(\boldsymbol{q})=\left(\frac{\beta}{2\,\pi}\right)^{d/2}\,\,e^{-\beta\,U_{\mathrm{f}}(\boldsymbol{q})}
\end{eqnarray}
The recovery of the  Monge--Amp\`ere--Kantorovich equations unveils the link 
between the minimum entropy production by the phase space process (\ref{def:process}) 
and the optimal control of the corresponding thermodynamic quantity which can 
be directly defined in the over-damped limit. 
As a matter of fact, the expansion of $A$ starts with the $O(\lambda)$ term specified by the 
centering condition (\ref{first:centering}) which is linear in $\tilde{\boldsymbol{p}}=\boldsymbol{p}/\lambda$.
The upshot is that the over-damped expansion of the minimum over $\mathbb{A}$ of the Langevin--Kramers entropy 
production starts with
\begin{eqnarray}
\label{multi:odentropy}
\mathcal{E}_{\tf,0}=\beta\,
\int_{0}^{\tf}\frac{\mathrm{d}t}{\tau}
\int_{\mathbb{R}^{d}}\mathrm{d}^{d}q\,\beta^{d/2}e^{-S_{(0,0)}}\left(\partial_{\boldsymbol{q}}A_{(0)}\right)
\overset{d}{\cdot}\partial_{\boldsymbol{q}}A_{(0)}+O(\lambda)
\end{eqnarray}
We therefore proved that the leading order of the expansion coincides with the minimal
entropy production by the Langevin--Smoluchowski dynamics.

\section{Discussion}
\label{sec:final}

Many physical systems are modeled by kinetic-plus-potential Hamiltonians
\begin{eqnarray}
\label{final:k+p}
H(\boldsymbol{p},\boldsymbol{q})=\frac{\parallel\boldsymbol{p}\parallel_{\id_{d}}^{2}}{2}+U(\boldsymbol{q},t)
\end{eqnarray}
The example of section \ref{sec:ex:azero} evinces that requiring (\ref{final:k+p}) adds an optimization 
constraint which is not generically satisfied by the extremal equations (\ref{Occam:eqs}) over $\mathbb{A}$. Furthermore, 
the kinetic-plus-potential hypothesis deeply affects the control problem by introducing two new difficulties. 
First, it restricts to the gradient $\partial_{\boldsymbol{q}}U$ of the potential energy the available $d$ control degrees of freedom. In this regard, 
it is worth emphasizing that it is a non-trivial consequence of H\"ormander theorem (see e.g. \cite{Rey06} and references 
therein) that a sufficiently regular (\ref{final:k+p}) is enough to generate a Fokker--Planck evolution of a smooth initial 
density for a Langevin--Kramers dynamics with degenerate noise acting only on $d$ out of $2\,d$ degrees of freedom.
Physical intuition suggests, however, that the surmise (\ref{final:k+p}) should not create an insurmountable difficulty
for controllability by which we mean the existence of a non-empty set of potentials $U(\boldsymbol{q},t)$ able to steer
a transition between two probability densities verifying physically plausible assumptions.
The second and more substantial difficulty is that inserting (\ref{final:k+p}) into (\ref{tf:entropy}) yields an
entropy production expression which depends upon the control only implicitly through the probability measure. 
Controls are in such a case only subject to the constraint imposed by the requirement of steering a finite-time 
transition between smooth probability densities. General considerations \cite{FlemingSoner} lead us to envisage
that entropy production may only attain an infimum when evaluated according to a \emph{singular control} strategy.
Such a strategy may take the form of a potential $U$ confining the momentum process within a ``inactivity region'' 
where $U$ vanishes. We expect the boundary of such inactivity region to be marked by the the vanishing of the
momentum gradient $\partial_{\boldsymbol{p}}V$ of the value function of the corresponding dynamic programming equation. 
Proving the realizability and optimality of such a control strategy are challenges lying beyond the scopes of
the present work. By insisting that the Hamiltonian belongs to the class of admissible controls $\mathbb{A}$, we 
focused instead on control strategies which we interpret as ``macroscopic'' in view the regularity assumptions on the control 
Hamiltonian. These assumptions are analogous to those adopted in previous studies of the entropy production by 
Langevin--Smoluchowski dynamics \cite{AuGaMeMoMG12} or by Markov jump processes \cite{MGMePe12}. We therefore 
gather that the existence of the entropy production minimum (\ref{Occam:eqs}),  degenerate because of non-coercivity, and 
which recovers in the over-damped limit the Monge--Amp\`ere--Kantorovich evolution, yields a robust general picture
of the ``optimal'' thermodynamics for a large class of physical processes described by Markovian evolution equations.

\section*{Acknowledgements}

It is a pleasure to thank Carlos Mej\'ia--Monasterio for discussions and useful comments on this manuscript.
The work of PMG is supported by by  the  Center  of
Excellence  ``Analysis and  Dynamics''of the  Academy of  Finland. The results of this paper were first presented
during the conference ``\emph{6-th Paladin memorial: Large deviations and rare events in physics and biology}'' Rome, 
September 23-25, 2013. The author wishes to warmly thank the organizers to give him the opportunity to partake the 
event as invited speaker.

\appendix

\section*{Appendices}

\section{Mean derivatives and current velocity of a diffusion process}
\label{ap:derivatives}

We recall that the drift of an $\mathbb{R}^{d}$-valued diffusion processes 
$\zeta\equiv\left\{\boldsymbol{\zeta}_{t}\,,\hspace{0.1cm}t\in\,[\ti\,,\tf]\right\}$ with generator
\begin{eqnarray}
\label{ap:generator}
\mathfrak{L}=\boldsymbol{b}\cdot\partial_{\boldsymbol{x}}
+\frac{1}{2}\mathsf{K}:\partial_{\boldsymbol{x}}\otimes\partial_{\boldsymbol{x}}
\end{eqnarray}
can be regarded as the \emph{mean forward derivative} of the process:
\begin{eqnarray}
\label{}
\mathrm{D}_{\boldsymbol{x}}\boldsymbol{\zeta}_{t}\equiv\lim_{\mathrm{d}t\downarrow\,0}
\mathrm{E}_{\boldsymbol{\zeta}_{t}=\boldsymbol{x}}\frac{\boldsymbol{\zeta}_{t+\mathrm{d}t}-\boldsymbol{\zeta}_{t}}{\mathrm{d}t}
=\mathfrak{L}\,\boldsymbol{x}
\end{eqnarray}
Under standard regularity hypotheses \cite{Nelson01}, it is possible 
to define the \emph{mean backward derivative} of the very same process
as
\begin{eqnarray}
\label{ap:mbd}
\mathrm{D}_{\boldsymbol{x}}^{-}\boldsymbol{\zeta}_{t}\equiv\lim_{\mathrm{d}t\downarrow\,0}
\mathrm{E}_{\boldsymbol{\zeta}_{t}=\boldsymbol{x}}\frac{\boldsymbol{\zeta}_{t}-\boldsymbol{\zeta}_{t-\mathrm{\mathrm{d}t}}}{\mathrm{d}t}
\end{eqnarray}
\begin{proposition}
Let $\zeta\equiv\left\{\boldsymbol{\zeta}_{t}\,,\hspace{0.1cm}t\in\,[\ti\,,\tf]\right\}$ be
a smooth diffusion with generator (\ref{ap:generator}) and density $\mathtt{m}$. The 
mean forward derivative is
\begin{eqnarray}
\label{}
\mathrm{D}_{\boldsymbol{x}}^{-}\boldsymbol{\zeta}_{t}=-\frac{1}{\mathtt{m}(\boldsymbol{x},t)}
\frac{\left(\mathfrak{L}^{\dagger}\,\boldsymbol{x}
-\boldsymbol{x}\,\mathfrak{L}^{\dagger}\,\right)}{\tau}\mathtt{m}(\boldsymbol{x},t)
\end{eqnarray}
\end{proposition}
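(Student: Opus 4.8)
The plan is to prove the displayed identity for the mean backward derivative $\mathrm{D}_{\boldsymbol{x}}^{-}\boldsymbol{\zeta}_{t}$ of (\ref{ap:mbd}) in \emph{weak} form, that is after pairing both sides against an arbitrary test function $\psi\in\mathbb{C}_{c}^{\infty}(\mathbb{R}^{d})$, and then dividing out the density $\mathtt{m}$, which is smooth and strictly positive under the running hypotheses. The entry point is Bayes' rule, which expresses the law of $\boldsymbol{\zeta}_{t-h}$ conditioned on $\boldsymbol{\zeta}_{t}=\boldsymbol{x}$ through the \emph{forward} transition density $\mathtt{p}$ of $\zeta$ and its one-time marginal $\mathtt{m}$,
\begin{eqnarray}
\label{}
\mathrm{E}_{\boldsymbol{\zeta}_{t}=\boldsymbol{x}}\boldsymbol{\zeta}_{t-h}
&=&\frac{1}{\mathtt{m}(\boldsymbol{x},t)}\int_{\mathbb{R}^{d}}\mathrm{d}^{d}y\,\boldsymbol{y}\,
\mathtt{p}(\boldsymbol{x},t|\boldsymbol{y},t-h)\,\mathtt{m}(\boldsymbol{y},t-h)
\end{eqnarray}
Two short-time ingredients, both correct up to $o(h)$, then do the work: the backward Kolmogorov equation yields $\int_{\mathbb{R}^{d}}\mathrm{d}^{d}x\,\psi(\boldsymbol{x})\,\mathtt{p}(\boldsymbol{x},t|\boldsymbol{y},t-h)=\psi(\boldsymbol{y})+(h/\tau)(\mathfrak{L}\psi)(\boldsymbol{y})+o(h)$, with $\mathfrak{L}$ acting in $\boldsymbol{y}$, while the Fokker--Planck equation $\tau\,\partial_{t}\mathtt{m}=\mathfrak{L}^{\dagger}\mathtt{m}$ --- with the time-rescaling $\tau$ already carried by (\ref{def:process}) and the generator (\ref{def:gen}) --- yields $\mathtt{m}(\boldsymbol{y},t-h)=\mathtt{m}(\boldsymbol{y},t)-(h/\tau)(\mathfrak{L}^{\dagger}\mathtt{m})(\boldsymbol{y},t)+o(h)$.

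Multiplying Bayes' rule by $\psi(\boldsymbol{x})\,\mathtt{m}(\boldsymbol{x},t)$, integrating in $\boldsymbol{x}$, and performing the $\boldsymbol{x}$-integration first with $\boldsymbol{y}$ held fixed turns the transition density into the semigroup expansion above; inserting in addition the expansion of $\mathtt{m}(\boldsymbol{y},t-h)$ one gets
\begin{eqnarray}
\label{}
\int\mathrm{d}^{d}x\,\psi\,\mathtt{m}\,\mathrm{E}_{\boldsymbol{\zeta}_{t}=\boldsymbol{x}}\boldsymbol{\zeta}_{t-h}
&=&\int\mathrm{d}^{d}y\,\boldsymbol{y}\,\psi\,\mathtt{m}
\nonumber\\
&&+\frac{h}{\tau}\int\mathrm{d}^{d}y\,\boldsymbol{y}\,[(\mathfrak{L}\psi)\,\mathtt{m}-\psi\,\mathfrak{L}^{\dagger}\mathtt{m}]+o(h)
\end{eqnarray}
Because $\int\mathrm{d}^{d}y\,\boldsymbol{y}\,\psi\,\mathtt{m}=\int\mathrm{d}^{d}x\,\psi\,\mathtt{m}\,\boldsymbol{x}$, subtracting this term and dividing by $h$ leaves only the $O(h)$ contribution; in it I use the defining property of the Lebesgue adjoint in the form $\int\mathrm{d}^{d}y\,(\mathfrak{L}\psi)\,\boldsymbol{y}\,\mathtt{m}=\int\mathrm{d}^{d}y\,\psi\,\mathfrak{L}^{\dagger}(\boldsymbol{y}\,\mathtt{m})$, so that the contribution collapses into the commutator $(\mathfrak{L}^{\dagger}\boldsymbol{x}-\boldsymbol{x}\,\mathfrak{L}^{\dagger})\mathtt{m}$, a first-order differential operation on $\mathtt{m}$. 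Passing to the limit $h\downarrow 0$ and recalling the definition (\ref{ap:mbd}) gives
\begin{eqnarray}
\label{}
\int\mathrm{d}^{d}x\,\psi\,\mathtt{m}\,\mathrm{D}_{\boldsymbol{x}}^{-}\boldsymbol{\zeta}_{t}
&=&-\frac{1}{\tau}\int\mathrm{d}^{d}x\,\psi\,(\mathfrak{L}^{\dagger}\boldsymbol{x}-\boldsymbol{x}\,\mathfrak{L}^{\dagger})\mathtt{m}
\end{eqnarray}
for every $\psi$, which is the assertion after division by $\mathtt{m}(\boldsymbol{x},t)$.

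As an independent check one evaluates $(\mathfrak{L}^{\dagger}\boldsymbol{x}-\boldsymbol{x}\,\mathfrak{L}^{\dagger})\mathtt{m}=-\boldsymbol{b}\,\mathtt{m}+\partial_{\boldsymbol{x}}\cdot(\mathsf{K}\,\mathtt{m})$, so the formula is nothing but the familiar backward drift $\boldsymbol{b}-\mathtt{m}^{-1}\partial_{\boldsymbol{x}}\cdot(\mathsf{K}\,\mathtt{m})$ of a time-reversed diffusion, written in a way that makes no reference to a coordinate splitting of $\mathsf{K}$; this is routine algebra and also pins down the signs. The one genuinely delicate step is legitimising the two short-time expansions \emph{uniformly enough} to survive integration against $\boldsymbol{y}\,\mathtt{m}(\boldsymbol{y},t)$ over all of $\mathbb{R}^{d}$: this is exactly where the H\"ormander-type smoothness of $\mathtt{p}$, its short-time Gaussian bounds, and the decay of $\boldsymbol{x}\,\mathtt{m}(\boldsymbol{x},t)$ enter --- precisely the regularity and integrability already built into (\ref{def:process}) --- and working in the tested formulation is what keeps these manipulations elementary by dispensing with pointwise control of $\mathtt{p}$. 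An alternative, equivalent derivation avoids transition densities altogether: write $\boldsymbol{\zeta}_{t}-\boldsymbol{\zeta}_{t-h}$ as the It\^o integral of the dynamics, take $\mathrm{E}_{\boldsymbol{\zeta}_{t}=\boldsymbol{x}}[\,\cdot\,]$, and retain the $O(h)$ part; the drift term reproduces $\boldsymbol{b}$, while the non-anticipating stochastic term contributes the divergence correction through its correlation with $\boldsymbol{\zeta}_{t}$, recovering Nelson's classical time-reversal identity.
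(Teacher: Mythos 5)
Your proof is correct and follows essentially the same route as the paper's: Bayes' rule (the time-reversal relation between backward and forward transition densities) combined with short-time expansions of the forward Kolmogorov semigroup and of the Fokker--Planck evolution of $\mathtt{m}$, collapsing into the commutator $(\mathfrak{L}^{\dagger}\boldsymbol{x}-\boldsymbol{x}\,\mathfrak{L}^{\dagger})\mathtt{m}$. Your tested formulation against $\psi\in\mathbb{C}_{c}^{\infty}$ is merely a cleaner rendering of the paper's delta-function manipulation, and your explicit evaluation of the commutator as $-\boldsymbol{b}\,\mathtt{m}+\partial_{\boldsymbol{x}}\cdot(\mathsf{K}\,\mathtt{m})$ is a useful sanity check the paper omits.
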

\begin{proof}~
By hypothesis $\zeta$ is Markovian with density $\mathtt{m}$ in the time interval $[\ti,\tf]$. 
Given its forward transition probability density $\mathtt{p}$
the backward transition probability $\mathtt{p}_{*}$ of the same process density must then satisfy
 \begin{eqnarray}
\label{derivatives:tpdf}
\mathtt{p}_{*}(\boldsymbol{x}_{1},t_{1}|\boldsymbol{x}_{2},t_{2})=
\frac{1}{\mathtt{m}(\boldsymbol{x}_{2},t_{2})}
\mathtt{p}(\boldsymbol{x}_{2},t_{2}|\boldsymbol{x}_{1},t_{1})\,\mathtt{m}(\boldsymbol{x}_{1},t_{1})
\end{eqnarray}
for any $\boldsymbol{x}_{1},\boldsymbol{x}_{2}\in\mathbb{R}^{d}$, $t_{1},t_{2}\in [\ti,\tf]$ such that
$t_{2}\geq t_{1}$. By (\ref{derivatives:tpdf}) it follows immediately
\begin{eqnarray}
\label{}
\mathrm{E}_{\boldsymbol{\zeta}_{t}=\boldsymbol{x}}\boldsymbol{\zeta}_{t-\mathrm{d}t}=
\int\mathrm{d}^{2d}x_{1}\,\boldsymbol{x}_{1}\,\frac{\mathtt{p}(\boldsymbol{x},t|\boldsymbol{x}_{1},t-\mathrm{d}t)
\,\mathtt{m}(\boldsymbol{x}_{1},t-\mathrm{d}t)}{\mathtt{m}(\boldsymbol{x},t)}
\end{eqnarray}
If we integrate the Fokker-Planck and its adjoint equation over
a time horizon of order $O(\mathrm{d}t)$ we arrive at
\begin{eqnarray}
\label{}
\hspace{-1.4cm}\lefteqn{
\mathrm{E}_{\boldsymbol{\zeta}_{t}=\boldsymbol{x}}\boldsymbol{\zeta}_{t-\mathrm{d}t}=\boldsymbol{x}+
}
\nonumber\\&&
\hspace{-1.2cm}
\frac{\dt}{\tau}\int\mathrm{d}^{2d}x_{1}\,\boldsymbol{x}_{1}\,
\frac{\mathtt{m}(\boldsymbol{x}_{1},t)\,\mathfrak{L}\,\delta^{(2d)}(\boldsymbol{x}_{1}-\boldsymbol{x})
-\delta^{(2d)}(\boldsymbol{x}_{1}-\boldsymbol{x})\,\mathfrak{L}^{\dagger}\,\mathtt{m}(\boldsymbol{x}_{1},t)}
{\mathtt{m}(\boldsymbol{x},t)}
+O\left(\frac{\dt}{\tau}\right)
\end{eqnarray}
which inserted in the definition (\ref{ap:mbd}) yields the claim.
\end{proof}
The mean backward drift governs the Fokker-Planck evolution of the density of 
the process from $\tf$ to $\ti$ \cite{Nelson01}.
By H\"ormander theorem \cite{Rey06}, the proposition above encompasses the degenerate 
noise case described by (\ref{def:process}). We are therefore entitled to write 
\begin{eqnarray}
\label{}
\tau\,\mathrm{D}_{\boldsymbol{x}}^{-}\boldsymbol{\chi}_{t}=\mathsf{J}\cdot\partial_{\boldsymbol{x}}H
-\mathsf{G}\cdot\partial_{\boldsymbol{x}}
\left(H+\frac{2}{\beta}\ln\frac{\mathtt{m}}{\beta^{d}}\right)
\end{eqnarray}
The \emph{current velocity} of a smooth diffusion is defined as
\begin{eqnarray}
\label{ap:cv}
\boldsymbol{v}(\boldsymbol{x},t)\equiv\,\tau\,
\frac{D_{\boldsymbol{x}}+D_{\boldsymbol{x}}^{-}}{2}\boldsymbol{\zeta}_{t}
\end{eqnarray}
whence (\ref{tf:cv}) follows immediately. The advantage of the current velocity
representation is that the Fokker-Planck equation for the probability density
$\mathtt{m}$ in $[\ti\,,\tf]$ is mapped by (\ref{ap:cv}) into the deterministic
mass conservation equation 
\begin{eqnarray}
\label{}
\tau\,\partial_{t}\mathtt{m}+\partial_{\boldsymbol{x}}\cdot\boldsymbol{v}\,\mathtt{m}
=0
\end{eqnarray}

\section{Variations of the Pontryagin functional}
\label{ap:var}

We avail us of the identity (\ref{Pontryagin:generator}) to treat (\ref{Pontryagin:action})
as a functional of the independent fields $A$ and $\mathtt{m}$.
The variation of (\ref{Pontryagin:action}) with respect to the costate function being trivial, 
we restrict here the attention only to those with respect to the probability density $\mathtt{m}$
and the non-equilibrium Helmholtz energy density $A$. The boundary terms generated by the
variation of $\mathtt{m}$ vanish because of the boundary conditions (\ref{Pontryagin:bc}):
\begin{eqnarray}
\label{var:pdf}
\lefteqn{
\mathcal{A}_{\mathtt{m}}^{\prime}(\mathtt{m},V,A)=
}
\nonumber\\&&
\int_{\ti}^{\tf}\frac{\mathrm{d}t}{\tau}\,\int_{\mathbb{R}^{2\,d}}
\mathrm{d}^{2\,d}x\,\mathtt{m}^{\prime}
\left\{
\parallel\partial_{\boldsymbol{x}}A\parallel_{\mathsf{G}}^{2}
+[\tau\,\partial_{t}+(\partial_{\boldsymbol{x}}A)\cdot(\mathsf{J}^{\dagger}-\mathsf{G})\cdot\partial_{\boldsymbol{x}}]V
\right\}
\end{eqnarray}
Upon applying the definition of the brackets (\ref{def:Poisson}), (\ref{def:pseudomb}) we
arrive at (\ref{Occam:value}).
The variation of $A$ can be couched into the form
\begin{eqnarray}
\label{var:fe}
\hspace{-1.0cm}
\mathcal{A}_{A}^{\prime}(\mathtt{m},V,A)=
-\int_{\ti}^{\tf}\frac{\mathrm{d}t}{\tau}\,\int_{\mathbb{R}^{2\,d}}
\mathrm{d}^{2\,d}x\,A^{\prime}\partial_{\boldsymbol{x}}\cdot\mathtt{m}\,\left\{
2\,\mathsf{G}\cdot\partial_{\boldsymbol{x}}A
-(\mathsf{J}^{\dagger}-\mathsf{G})\cdot\partial_{\boldsymbol{x}}V
\right\}
\end{eqnarray}
Recalling the definition of the microscopic entropy (\ref{def:Mentropy}),
we see that  stationarity of (\ref{var:fe}) admits a geometric interpretation
on the De Rahm--Witten complex over $\mathbb{L}^{(2)}(\mathbb{R}^{2d},\mathtt{m}\,\mathrm{d}^{2d}x)$ \cite{Helffer} equipped with the 
exterior derivative
\begin{eqnarray}
\label{var:exterior}
\mathrm{d}_{S}=e^{-S}\mathrm{d}\,e^{S}
\end{eqnarray} 
Namely it states that the dual $\mathrm{d}_{S}^{*}$ to (\ref{var:exterior}) must 
annihilate the $\mathrm{1}$-form 
\begin{eqnarray}
\label{}
\mathsf{\alpha}=
[2\,\partial_{\boldsymbol{x}}A
+(\mathsf{J}+\mathsf{G})\cdot\partial_{\boldsymbol{x}}V]\cdot\mathrm{d}x 
\end{eqnarray}
In terms of the operator (\ref{var:WiLap}) the condition translates into 
(\ref{Occam:extremal}). We also notice that that (\ref{var:WiLap}) is a degenerate
``Witten'' Laplacian \cite{Helffer} on the same complex in consequence of the inequality 
\begin{eqnarray}
\label{Wi:negdef}
\hspace{-1.0cm}
\int_{\mathbb{R}^{2d}}\mathrm{d}^{2d}x\,\mathtt{m}\,f\,\mathfrak{S}\,f=
-\int_{\mathbb{R}^{2d}}\mathrm{d}^{2d}x\,\mathtt{m}\,\parallel\partial_{\boldsymbol{x}}f\parallel_{\mathsf{G}}^{2}\,\leq\, 0
\end{eqnarray}
holding for any $f\,\in\,\mathbb{L}^{2}(\mathbb{R}^{2d},\mathtt{m}\,\mathrm{d}^{2d}x)$. 

We end this this appendix with a remark. If the nullspace 
in $\mathbb{L}^{(2)}(\mathbb{R}^{2d},\mathtt{m}\,\mathrm{d}^{2d}x)$ of the Witten Laplacian 
\begin{eqnarray}
\label{}
\bar{\mathfrak{S}}=-(\partial_{\boldsymbol{x}}S)\cdot\partial_{\boldsymbol{x}}
+\id_{2d}:\partial_{\boldsymbol{x}}\otimes\partial_{\boldsymbol{x}} 
\end{eqnarray}
consists only of constant functions then on the De Rahm--Witten complex 
(\ref{var:exterior}) then current velocity (\ref{tf:cv}) admits the Hodge 
decomposition
\begin{eqnarray}
\label{ap:Hodge}
\boldsymbol{v}=-\partial_{\boldsymbol{x}}H_{0}+\boldsymbol{h}_{2}
\end{eqnarray}
where $H_{0}$ is a differentiable phase-space function specified by the solution of
\begin{eqnarray}
\label{Wi:H0}
\bar{\mathfrak{S}}\,H_{0}=-\brl S\,,A\brr_{\mathsf{J}^{\dagger}}+\mathfrak{S}A
\end{eqnarray}
and
\begin{eqnarray}
\label{}
\boldsymbol{h}_{2}\equiv\,e^{S}\partial_{\boldsymbol{x}}\cdot\,e^{-S}\mathsf{H}_{2}
\end{eqnarray}
 $\mathsf{H}_{2}$ being differentiable anti-symmetric rank-two tensor. By construction the
elements of the decomposition in (\ref{ap:Hodge}) are orthogonal in $\mathbb{L}^{(2)}(\mathbb{R}^{2d},\mathtt{m}\,\mathrm{d}^{2d}x)$. 

There are two interesting consequences of (\ref{ap:Hodge}). The first is that
mass-transport equation for $\mathtt{m}$ depends only upon $H_{0}$ owing to
\begin{eqnarray}
\label{ap:indep}
\partial_{\boldsymbol{x}}\cdot \left(\mathtt{m}\,e^{S}\partial_{\boldsymbol{x}}\cdot\,
e^{-S}\mathsf{H}_{2}\right)
=\beta^{d}\partial_{\boldsymbol{x}}\otimes\partial_{\boldsymbol{x}}: e^{-S}\mathsf{H}_{2}=0
\end{eqnarray}
The second is that identifying the the gradient in (\ref{ap:Hodge}) as the dissipative
component of the dynamics allows us to define the ``entropy production''
\begin{eqnarray}
\label{ap:ep}
\tilde{\mathcal{E}}_{\hoz}=\beta\,\int_{\ti}^{\tf}\frac{\dt}{\tau}
\int_{\mathbb{R}^{2d}}\mathrm{d}^{2d}x\,\mathtt{m}\,\parallel\partial_{\boldsymbol{x}}H_{0}\parallel^{2}
\end{eqnarray}
At variance with (\ref{tf:entropy}), is a coercive functional of $H_{0}$ the 
optimal control whereof reduces by (\ref{ap:indep}) to that of the Langevin--Smoluchowski 
case in $\mathbb{R}^{2d}$. It must be stressed, however, that carries different physical 
information than (\ref{tf:entropy}) since this latter depends also on $\boldsymbol{h}_{2}$.

\section*{References}

\bibliography{/home/paolo/RESEARCH/BIBTEX/jabref}{}

\providecommand{\newblock}{}
\begin{thebibliography}{10}
\expandafter\ifx\csname url\endcsname\relax
  \def\url#1{{\tt #1}}\fi
\expandafter\ifx\csname urlprefix\endcsname\relax\def\urlprefix{URL }\fi
\providecommand{\eprint}[2][]{\url{#2}}

\bibitem{Rit08}
Ritort F 2008 {Nonequilibrium Fluctuations in Small Systems: From Physics to
  Biology} {\em Advances in Chemical Physics\/} vol 137 ed Rice S~A (John Wiley
  \& Sons, Inc., Hoboken, NJ, USA.) (\textit{Preprint} \eprint{0705.0455})

\bibitem{Haynie}
Haynie D~T 2008 {\em {Biological thermodynamics}\/} 2nd ed (Cambridge
  University Press) ISBN \hspace{-0.1cm}-13: 978-0-521-71134-0; e-ISBN-13:
  978-0-511-38637-4

\bibitem{SaTl13}
Savir Y and Tlusty T 2013 {\em Cell\/} {\bf 153} 471 -- 479 ISSN 0092-8674
  \urlprefix\url{http://www.weizmann.ac.il/complex/tlusty/research.html}

\bibitem{Ben82}
Bennett C~H 1982 {\em International Journal of Theoretical Physics\/} {\bf 21}
  905--940

\bibitem{Lan61}
Landauer R 1961 {\em IBM Journal of Research and Development\/} {\bf 5}
  183--191
  \urlprefix\url{http://researchweb.watson.ibm.com/journal/rd/053/ibmrd0503C.pdf}

\bibitem{BeArPeCiDiLu12}
B\'erut A, Arakelyan A, Petrosyan A, Ciliberto S, Dillenschneider R and Lutz E
  2012 {\em Nature\/} {\bf 483} 187–189

\bibitem{AuGaMeMoMG12}
Aurell E, Gaw\c{e}dzki K, Mej\'ia-Monasterio C, Mohayaee R and
  Muratore-Ginanneschi P 2012 {\em Journal of Statistical Physics\/} {\bf 147}
  487--505 (\textit{Preprint} \eprint{1201.3207})

\bibitem{AuMeMG11}
Aurell E, Mej\'ia-Monasterio C and Muratore-Ginanneschi P 2011 {\em Physical
  Review Letters\/} {\bf 106} 250601 (\textit{Preprint} \eprint{1012.2037})

\bibitem{AuMeMG12}
Aurell E, Mej\'ia-Monasterio C and Muratore-Ginanneschi P 2012 {\em Physical
  Review E\/} {\bf 85}(2) 020103(R) (\textit{Preprint} \eprint{1111.2876})
  \urlprefix\url{http://link.aps.org/doi/10.1103/PhysRevE.85.020103}

\bibitem{Gaw13}
Gaw\c{e}dzki K 2013 {Fluctuation Relations in Stochastic Thermodynamics}
  lecture notes (\textit{Preprint} \eprint{1308.1518})

\bibitem{MGMePe12}
Muratore-Ginanneschi P, Mej\'ia-Monasterio C and Peliti L 2013 {\em Journal of
  Statistical Physics\/} {\bf 150} 181--203 (\textit{Preprint}
  \eprint{1203.4062})

\bibitem{AlRiRi11}
Alemany A, Ribezzi M and Ritort F 2011 {\em AIP Conference Proceedings\/} {\bf
  1332} 96--110 (\textit{Preprint} \eprint{1101.3174})
  \urlprefix\url{http://link.aip.org/link/?APC/1332/96/1}

\bibitem{Se98}
Sekimoto K 1998 {\em Progress of Theoretical Physics Supplement\/} {\bf 130}
  17--27

\bibitem{Sekimoto}
Sekimoto K 2010 {\em {Stochastic Energetics}\/} ({\em Lecture Notes in
  Physics\/} vol 799) (Springer) ISBN \hspace{-0.1cm}-13: 978-3-642-05410-5

\bibitem{JiangQianQian}
Jiang D~Q, Qian M and Qian M~P 2004 {\em {Mathematical Theory of Nonequilibrium
  Steady States}\/} ({\em Lecture Notes in Mathematics\/} vol 1833) (Springer)
  ISBN \hspace{-0.1cm}-13: 978-3-540-20611-8

\bibitem{PMG13}
Muratore-Ginanneschi P 2013 {\em Journal of Physics A: Mathematical and
  General\/} {\bf 46} 275002 (\textit{Preprint} \eprint{1210.1133})

\bibitem{FlemingSoner}
Fleming W~H and Soner M~H 2006 {\em {Controlled Markov processes and viscosity
  solutions}\/} 2nd ed ({\em Stochastic modelling and applied probability\/}
  vol~25) (Springer) ISBN \hspace{-0.1cm}-13: 978-0-387-26045-7

\bibitem{KuLe51}
Kullback S and Leibler R 1951 {\em Annals of Mathematical Statistics\/} {\bf
  22} 79--86 \urlprefix\url{http://projecteuclid.org/euclid.aoms/1177729694}

\bibitem{FlemingRishel}
Fleming W and Rishel R 1975 {\em {Deterministic and Stochastic Optimal
  Control}\/} Applications of Mathematics (Springer) ISBN \hspace{-0.1cm}-13:
  978-1-461-26382-1; e-ISBN-13: 978-1-461-26380-7

\bibitem{MalliarisBrock}
Malliaris A~G and Brock W~A 1999 {\em {Stochastic methods in economics and
  finance}\/} 8th ed ({\em Advanced textbooks in economics\/} vol~17) (Elsevier
  Science) ISBN \hspace{-0.1cm}-13: 978-0-444-86201-3

\bibitem{Evans_OCT}
Evans L~C {An Introduction to Mathematical Optimal Control Theory} lecture
  notes, University of California at Berkeley
  \urlprefix\url{http://math.berkeley.edu/~evans/control.course.pdf}

\bibitem{GuMo83}
Guerra F and Morato L~M 1983 {\em Physical Review D\/} {\bf 27} 1774--1786

\bibitem{vHa07}
van Handel R 2007 {Stochastic Calculus and Stochastic Control} lecture notes,
  Caltech \urlprefix\url{http://www.princeton.edu/~rvan/acm217/ACM217.pdf}

\bibitem{BeLiPa78}
Bensoussan A, Lions J~L and Papanicolaou G 1978 {\em {Asymptotic Analysis for
  Periodic Structures}\/} ({\em Studies in mathematics and its applications\/}
  vol~5) (North-Holland, Amsterdam) ISBN \hspace{-0.1cm}-13: 978-0-444-85172-7

\bibitem{PaSt08}
Pavliotis G~A and Stuart A~M 2008 {\em {Multiscale methods: averaging and
  homogenization}\/} ({\em Texts in applied mathematics\/} vol~53) (Springer)
  ISBN \hspace{-0.1cm}-13: 978-0-387-73828-4

\bibitem{Villani}
Villani C 2009 {\em Optimal transport: old and new\/} ({\em Grundlehren der
  mathematischen Wissenschaften\/} vol 338) (Springer)
  \urlprefix\url{http://www.umpa.ens-lyon.fr/~cvillani/surveys.html#oldnew}

\bibitem{Rey06}
Rey-Bellet L 2006 {Ergodic Properties of Markov Processes} {\em Quantum Open
  Systems II. The Markovian approach\/} ({\em Lecture Notes in Mathematics\/}
  vol 1881) (Springer) pp 1--39
  \urlprefix\url{http://www.math.umass.edu/~lr7q/ps_files/EMP.pdf}

\bibitem{Sha48}
Shannon C~E 1948 {\em The Bell System Technical Journal\/} {\bf 27} 379–423,
  623–656
  \urlprefix\url{cm.bell-labs.com/cm/ms/what/shannonday/shannon1948.pdf}

\bibitem{Mor86}
Morrison P~J 1986 {\em Physica D\/} {\bf 18} 410–419

\bibitem{Mor09}
Morrison P~J 2009 {\em Journal of Physics: Conference Series\/} {\bf 169}
  012006 \urlprefix\url{http://stacks.iop.org/1742-6596/169/i=1/a=012006}

\bibitem{Nelson85}
Nelson E 1985 {\em {Quantum fluctuations}\/} Princeton series in Physics
  (Princeton University Press) ISBN \hspace{-0.1cm}-13: 978-0-691-08379-7
  \urlprefix\url{https://web.math.princeton.edu/~nelson/books.html}

\bibitem{Nelson01}
Nelson E 2001 {\em {Dynamical Theories of Brownian Motion}\/} 2nd ed (Princeton
  University Press) ISBN \hspace{-0.1cm}-13: 978-0-691-07950-9
  \urlprefix\url{https://web.math.princeton.edu/~nelson/books.html}

\bibitem{MaReMo00}
Maes C, Redig F and Moffaert A~V 2000 {\em Journal of Mathematical Physics\/}
  {\bf 41} 1528--1554 ISSN 0022-2488

\bibitem{ChGa08}
Ch{\'e}trite R and Gaw\c{e}dzki K 2008 {\em Communications in Mathematical
  Physics\/} {\bf 282} 469--518 (\textit{Preprint} \eprint{0707.2725})

\bibitem{Jar97}
Jarzynski C 1997 {\em Physical Review Letters\/} {\bf 78}(14) 2690--2693
  (\textit{Preprint} \eprint{cond-mat/9610209})

\bibitem{Za86}
Zambrini J~C 1986 {\em Journal of Mathematical Physics\/} {\bf 27} 2307--2330
  ISSN 00222488 \urlprefix\url{http://dx.doi.org/doi/10.1063/1.527002}

\bibitem{Klebaner}
Klebaner F~C 2005 {\em Introduction to stochastic calculus with applications\/}
  2nd ed (Imperial College Press) ISBN \hspace{-0.1cm}-13: 978-1-86094-555-7

\bibitem{Mey82}
Meyer P~A 1982 {\em S\'eminaire de probabilit\'es de Strasbourg\/} {\bf 16}
  165--207 \urlprefix\url{http://www.numdam.org/item?id=SPS_1982__S16__165_0}

\bibitem{EvSe94}
Evans D~J and Searles D~J 1994 {\em Physical Review E\/} {\bf 50}(2) 1645--1648
  \urlprefix\url{http://link.aps.org/doi/10.1103/PhysRevE.50.1645}

\bibitem{GaCo95}
Gallavotti G and Cohen E~G~D 1995 {\em Physical Review Letters\/} {\bf 74}(14)
  2694--2697 (\textit{Preprint} \eprint{chao-dyn/9410007})
  \urlprefix\url{http://link.aps.org/doi/10.1103/PhysRevLett.74.2694}

\bibitem{Kur98}
Kurchan J 1998 {\em Journal of Physics A: Mathematical and General\/} {\bf 31}
  3719 (\textit{Preprint} \eprint{cond-mat/9709304})
  \urlprefix\url{http://stacks.iop.org/0305-4470/31/i=16/a=003}

\bibitem{LeSp99}
Lebowitz J~L and Spohn H 1999 {\em Journal of Statistical Physics\/} {\bf 95}
  333--365 (\textit{Preprint} \eprint{cond-mat/9811220})
  \urlprefix\url{http://arxiv.org/abs/cond-mat/9811220}

\bibitem{Cro99}
Crooks G~E 1999 {\em Physical Review E\/} {\bf 60}(3) 2721--2726
  (\textit{Preprint} \eprint{cond-mat/9901352})
  \urlprefix\url{http://link.aps.org/doi/10.1103/PhysRevE.60.2721}

\bibitem{ChGu10}
Ch{\'e}trite R and Gupta S 2011 {\em Journal of Statistical Physics\/} {\bf
  143} 543–584 (\textit{Preprint} \eprint{1009.0707})

\bibitem{Cou11}
Coulibaly-Pasquier K~A 2011 {\em Annales de l'Institut Henri Poincar\'e,
  Probabilit\'es et Statistiques\/} {\bf 47} 515--538 (\textit{Preprint}
  \eprint{hal-00352805})
  \urlprefix\url{http://projecteuclid.org/euclid.aihp/1300887280}

\bibitem{Struchtrup}
Struchtrup H 2005 {\em Macroscopic Transport Equations for Rarefied Gas Flows:
  Approximation Methods in Kinetic Theory\/} ({\em Interaction of Mechanics and
  Mathematics\/} vol 258) (Springer) ISBN \hspace{-0.1cm}-13: 978-3-540-24542-1

\bibitem{BlCrHoMa00}
Bloch A~M, Crouch P~E, Holm D~D and Marsden J~E 2000 {An Optimal Control
  Formulation for Inviscid Incompressible Ideal Fluid Flow} {\em Proceedings of
  the 39th IEEE Conference on Decision and Control\/} (IEEE) (\textit{Preprint}
  \eprint{nlin/0103042})

\bibitem{Helffer}
Helffer B 2002 {\em {Semiclassical Analysis, Witten Laplacians and Statistical
  Mechanics}\/} ({\em Series on Partial Differential Equations and
  Applications\/} vol~1) (World Scientific Publishing Company) ISBN
  \hspace{-0.1cm}-13: 978-981-238-098-2

\end{thebibliography}

 \end{document}